\documentclass[a4paper,numberwithinsect,cleveref,english]{lipics-v2021}
\usepackage{xspace,amsmath,amsfonts,amssymb,bbm,ifthen,mathtools}

\usepackage{tikz}
\usetikzlibrary{arrows,automata,shapes,calc,decorations.markings,positioning,patterns.meta,intersections,decorations.pathreplacing}
\tikzset{
  myautomata/.style={%
    ->, >=stealth', auto, initial text=,}
  state/.style={%
    inner sep=.5mm, minimum size=3.5mm, draw=black, circle
  }
  state with output/.style={%
    shape=rectangle split, rectangle split parts=2, draw, fill=white,
    initial text=, inner sep=1mm
  }
}

\nolinenumbers

\usepackage[
    type={CC},
    modifier={by},
    version={4.0},
]{doclicense}

\usepackage[Tol]{colorblind}
\usepackage{bm}

\usepackage[noend]{algpseudocode}
\usepackage[bibliography=common]{apxproof}

\AddToHook{env/proposition/begin}{\crefalias{theorem}{proposition}}
\AddToHook{env/lemma/begin}{\crefalias{theorem}{lemma}}
\AddToHook{env/example/begin}{\crefalias{theorem}{example}}
\AddToHook{env/definition/begin}{\crefalias{theorem}{definition}}
\AddToHook{env/corollary/begin}{\crefalias{theorem}{corollary}}
\AddToHook{env/claim/begin}{\crefalias{theorem}{claim}}
\AddToHook{env/remark/begin}{\crefalias{theorem}{remark}}

\usepackage{algorithm}
\usepackage[indLines=true,noEnd=true,beginComment=/*~,endComment=~*/,
  beginLComment=/*~, endLComment=~*/, commentColor=dark gray]{algpseudocodex}
\tikzset{algpxIndentLine/.style={draw=gray,very thin,-}}
\usepackage{caption}

\renewcommand{\algorithmicrequire}{\textbf{Input:}}
\renewcommand{\algorithmicensure}{\textbf{Output:}}

\newtheoremrep{theorem}{Theorem}[section]       %
\newtheoremrep{proposition}[theorem]{Proposition}
\newtheoremrep{lemma}[theorem]{Lemma}
\newtheoremrep{example}[theorem]{Example}
\newtheoremrep{definition}[theorem]{Definition}
\newtheoremrep{corollary}[theorem]{Corollary}
\newtheoremrep{conjecture}[theorem]{Conjecture}
\newtheoremrep{claim}[theorem]{Claim}
\newtheoremrep{remark}[theorem]{Remark}
\newtheoremrep{fact}[theorem]{Fact}

\newcommand{\ZG}{ZG\xspace}
\newcommand{\LZG}{LZG\xspace}
\newcommand{\QLZG}{QLZG\xspace}
\newcommand{\threshext}{semi-extensible \ZG}
\newcommand{\Threshext}{Semi-extensible \ZG}
\newcommand{\THreshext}{Semi-Extensible \ZG}
\newcommand{\cond}{\text{Cond}}
\newcommand{\allrare}{case \emph{all-rare}\xspace}
\newcommand{\nocond}{case \emph{no-cond}\xspace}

\newcommand{\NN}{\mathbb{N}}

\title{Constant-Time Dynamic Enumeration of Word Infixes in a Regular Language}

\author{Antoine Amarilli}{Univ. Lille, Inria, CNRS, Centrale Lille, UMR 9189 CRIStAL, F-59000 Lille, France \and \url{https://a3nm.net/}}{a3nm@a3nm.net}{https://orcid.org/0000-0002-7977-4441}{
Partially supported by the ANR project EQUUS ANR-19-CE48-0019 and by the
Deutsche Forschungsgemeinschaft (DFG, German Research Foundation) – 431183758.}
\author{Sven Dziadek}{SAMOVAR, Télécom SudParis, Institut Polytechnique de Paris, Palaiseau, France}{sven.dziadek@telecom-sudparis.eu}{https://orcid.org/0000-0001-6767-7751}{Supported by the ANR project STeP2 ANR-22-EXES-0013. Partially supported by the ANR project EQUUS ANR-19-CE48-0019 and by the
Deutsche Forschungsgemeinschaft (DFG, German Research Foundation) – 431183758.}
\author{Luc Segoufin}{Inria \& ENS Paris, France}{luc.segoufin@inria.fr}{https://orcid.org/0000-0002-9564-7581}{}

\authorrunning{A. Amarilli, S. Dziadek, and L. Segoufin} %

\Copyright{Antoine Amarilli, Sven Dziadek, and Luc Segoufin} %

\ccsdesc{Theory of computation~Regular languages}
\keywords{regular language, dynamic membership, enumeration, infix, ZG, automata}

\relatedversion{}

\hideLIPIcs

\acknowledgements{We would like to thank the reviewers for their valuable feedback.}

\begin{document}

\maketitle

\begin{abstract}
  For a fixed regular language $L$, the \emph{enumeration of
    $L$-infixes} is the following task: we are given an input word $w=a_1 \cdots
    a_n$ and we must enumerate the infixes of~$w$ that belong to~$L$, i.e., the
  pairs $i \leq j$ such that $a_i \cdots a_j \in L$. We are interested in
  \emph{dynamic enumeration of $L$-infixes}, where we must additionally support 
  letter substitution updates on~$w$ (e.g., ``replace the $i$-th letter of~$w$ by a letter~$a$''). Each update changes the set of infixes to enumerate, and resets the
  enumeration state.

  We study for which regular languages $L$ we can perform dynamic enumeration of
  $L$-infixes in constant delay (i.e., the next infix is always produced in
  constant time) and constant additional memory throughout the enumeration,
  while supporting each update in constant time.

  We show that, for languages $L$ with a neutral letter, if the language $L$
  belongs to the class ZG and is
  \emph{extensible} (i.e., if $u \in L$ and $u$ is a factor of $v$ then $v \in
  L$), then dynamic enumeration of $L$-infixes can be achieved with a
  simple algorithm that ensures constant-time updates and constant delay, but not
  constant additional memory. Our main contribution is then to show an algorithm
  that additionally uses only constant additional memory, and applies to a more
  general class of \emph{semi-extensible ZG} languages for which we give several
  equivalent characterizations. 
  We further discuss whether our results can be generalized to larger
  language classes and show some (conditional) lower bounds.
\end{abstract}

\section{Introduction}

Given a regular language $L$ and a word $w=a_1 \cdots a_n$ the
\emph{$L$-infixes} of $w$ are the pairs $[i,j]$ of~$w$ such that the factor
$a_i \cdots a_j$ belongs to $L$. We study in this paper how to compute all $L$-infixes of $w$
in the framework of \emph{enumeration algorithms}: we distinguish
between the \emph{preprocessing time} required to read the input and
compute the first result, the \emph{delay} between any two successive
results, and the \emph{additional memory} used throughout the enumeration
(i.e., the memory which is written during the enumeration phase).
We always assume in this paper that the regular language $L$ is
fixed: we measure complexity only as a function of the length~$n$ of the input
word, and do not study how the complexity depends on~$L$.
In this setting, the best possible efficiency guarantees for an enumeration
algorithm is to ensure that it is 
\emph{linear preprocessing and constant delay}~\cite{kazana2011first,kazana2013enumeration}, i.e., 
its preprocessing time is linear in $n$ and that its delay is independent from~$n$.

Enumeration algorithms have been studied in many settings~\cite{Wasa16} and in
particular in database theory~\cite{Segoufin14}, where they have been used to
efficiently enumerate query answers. In particular, as $L$-infixes can be
expressed in MSO, it follows from~\cite{bagan2006mso} that there are efficient
enumeration algorithms to compute $L$-infixes for any regular language~$L$.
Other enumeration algorithms have been
designed in several special cases, in particular for \emph{document
spanners}~\cite{florenzano2018constant,amarilli2019constant}, a formalism for
declarative information extraction~\cite{fagin2015document}: these algorithms 
can in particular be used to enumerate $L$-infixes while also ensuring 
that the complexity remains reasonable as a function of the size of an automaton representing~$L$.

However, one drawback of these enumeration algorithms is that
their linear-time preprocessing phase must be re-run from scratch whenever the
word is modified.  This has motivated the investigation of enumeration
algorithms on dynamic data, which aim at enumerating solutions
while supporting efficient updates to the input word. In this setting,
whenever the input word is modified, the enumeration of the infixes is
restarted from scratch. Dynamic
enumeration was first studied for MSO queries over
trees~\cite{losemann2014mso}, supporting polylogarithmic delay and update time.
Finer bounds on the evaluation of MSO over words were then shown
in~\cite{niewerth2018enumeration}: for any fixed MSO query, given a word $w$ of
length $n$, one can compute in $O(n)$ a data structure allowing constant delay
enumeration of the query results with constant auxiliary memory, and supporting $O(\log n)$ update time
(allowed updates are relabeling, insertions, and deletions).  This implies that the same
bounds hold for the dynamic enumeration of $L$-infixes for any fixed regular
language~$L$.
Dynamic enumeration was later studied in the context of
document spanners, with support for SLP-represented documents and with
complex editing operations~\cite{schmid2022query}; and also
for the specific task of pattern matching, with algorithms
that achieve efficiency also in the language $L$ describing the searched pattern~\cite{alstrup2000pattern,amir2007dynamic}.

While linear preprocessing, constant delay, and constant auxiliary memory are hard to beat, it is not
clear whether the logarithmic update time is the best possible.  It
turns out that for some regular languages, the answer is essentially yes: there
are regular languages $L$ admitting an unconditional
$\Omega(\log n / \log \log n)$ lower bound in the cell probe model for the
\emph{dynamic membership} task of maintaining whether a word belongs to~$L$
under letter substitution updates (\cite{frandsen1997}, using
results from~\cite{fredman1989cell}; see also~\cite{amarilliICALP21}).  This
can be easily transformed into a similar lower bound for the dynamic
enumeration of $L$-infixes for some regular languages~$L$. However, there are
also regular languages $L$ for which the dynamic enumeration problem can be solved
more efficiently, i.e. with linear preprocessing time, constant delay, constant
update time, and constant auxiliary memory.
Our goal in this paper is to identify the regular languages for which dynamic
enumeration for $L$-infixes enjoys this favorable complexity regime (with
letter substitution updates).

\subparagraph*{Contributions.}
We introduce a natural property on regular languages, called being
\emph{extensible}: $L$ is extensible if whenever $u \in L$ then
$sut \in L$ for all words $s$ and~$t$. 
Extensible regular languages are an interesting class of languages to study
for $L$-infix enumeration, because extensibility ensures
that any infix that extends an $L$-infix is itself an $L$-infix.
It further turns out that, for extensible
languages~$L$, the dynamic enumeration problem for $L$-infixes is at least as hard as
the dynamic membership problem for~$L$. Hence, focusing on regular languages with a
so-called \emph{neutral letter} (i.e., a letter having no influence on
membership to the language, which is a common simplifying assumption~\cite{koucky2005bounded,barrington2005first}), the good
candidates for tractable dynamic enumeration of $L$-infixes are the languages from the class \emph{ZG} studied
in~\cite{amarilliICALP21}: these are (conditionally) the only languages with
neutral letters for
which dynamic membership is feasible in constant time per update.

Our first contribution is to show that, in the presence of a neutral
letter and assuming extensibility, the tractability of ZG languages $L$
for dynamic membership extends to
dynamic enumeration of $L$-infixes.
Specifically (\cref{prp:simple}): for any fixed
extensible language~$L$ featuring a neutral letter, we can devise an algorithm
for the dynamic enumeration problem for $L$-infixes that achieves linear
preprocessing, constant delay enumeration, and
constant-time updates.
The algorithm is rather simple and proceeds using an oracle for
dynamic membership to~$L$ as a black-box, together with an even-odd
trick~\cite{uno2003two}. However, this algorithm forgoes the constant additional memory requirement.
Our main contribution is thus to propose a dynamic enumeration algorithm
that also uses only constant additional memory. We do so for extensible ZG languages, and in fact for a slight extension 
that we introduce in the present paper, called
\emph{semi-extensible ZG languages}.
Intuitively,
such languages relax the requirement of extensibility by enforcing
it only on words that contain a sufficiently large number of occurrences of some non-neutral
letter. So we show (\cref{thm:main}): for any fixed semi-extensible ZG language $L$, we can devise an algorithm
for dynamic enumeration of $L$-infixes that achieves linear
preprocessing, 
constant delay enumeration,
constant-time updates, and constant additional memory.

The key data structure involved in the proof of \cref{thm:main} is rather
simple: it consists in maintaining for each letter an unordered list of all its
occurrences in the word, called \emph{occurrence lists}. This can be done using
well-known data structures, see for instance the dynamic membership algorithms
for ZG languages in~\cite{amarilliICALP21}).  The enumeration algorithm based
on these occurrences lists is however rather technical. It performs 
two kinds of background traversal processes (which are amortized over the rest
of the algorithm) and memorization of a carefully designed constant amount of
\emph{left information} and \emph{right information} in relation to a pointer
that sweeps over the word.

We finally show that semi-extensible ZG languages are not the only languages
admitting tractable dynamic enumeration by showing other cases
with ad-hoc algorithms: e.g., the non-semi-extensible ZG language $L = (aa)^* a$,
or $L = b^* a$ and $L = a \Sigma^* a$ (which are non-extensible languages
outside of ZG). We show however that there are (non-semi-extensible) ZG
languages $L$ that do not admit dynamic enumeration of $L$-infixes with
constant-time updates and constant delay enumeration, subject to the
\emph{prefix-$U_1$} complexity hypothesis of~\cite{amarilliICALP21}. All of this
justifies that the constant-time tractability boundary for dynamic infix
enumeration is in fact incomparable with the class ZG.

\subparagraph*{Paper structure.}
We give preliminaries and formally introduce the task of dynamic infix
enumeration in \cref{sec:prelim}. We
briefly review the results of~\cite{amarilliICALP21}  in
\cref{sec:dynamic_membership}. We present our simple algorithm for extensible ZG
languages (\cref{prp:simple}) in \cref{sec:extensible}. We then define the class of
semi-extensible ZG languages in \cref{sec:threshext}. We give our
constant additional memory algorithm for semi-extensible ZG languages (\cref{thm:main}) in 
\cref{sec:main}. We discuss other tractable cases and lower bounds in
\cref{sec:lower}, and conclude in \cref{sec:conc}. 
Most detailed proofs are deferred to the appendix.

\section{Preliminaries and Problem Statement}
\label{sec:prelim}

We fix a finite alphabet $\Sigma$.  We write $\Sigma^*$ for the set of words
over~$\Sigma$, write $|w|$ for the length of a word $w \in \Sigma^*$, and write
$\epsilon$ for the empty word.  For $a\in\Sigma$ and $w \in \Sigma^*$, we write
$|w|_a$ for the number of occurrences of $a$ in~$w$.  We say that
$u \in \Sigma^*$ is a \emph{factor} of $v \in \Sigma^*$ if there exist
$s,t \in \Sigma^*$ such that $v = sut$.  A language $L$ over an alphabet
$\Sigma$ is a subset of $\Sigma^*$. A letter $e \in \Sigma$ is \emph{neutral}
for a language $L$ if, for any two words $s, t \in \Sigma^*$, we have
$st \in L$ iff $set \in L$.

\subparagraph*{Infixes.}  Writing a word $w \in \Sigma^*$ as
$w = a_1 \cdots a_n$, a \emph{position} of~$w$ is a value $i$ with
$1 \leq i \leq n$. We write $w[i]$ to mean the $i$-th letter $a_i$ of~$w$.  An
\emph{infix} of~$w$ is a 2-tuple $[i,j]$ of positions of~$w$, with
$1 \leq i \leq j \leq n$: its \emph{left endpoint} is~$i$, its \emph{right
  endpoint} is~$j$, and it \emph{realizes} the word
$w[i,j] \coloneq a_i \cdots a_j$ which is a factor of~$w$. Note that infixes
always realize nonempty factors by definition, and that several infixes may
realize the same factor.  For $L \subseteq \Sigma^*$ a language, an
\emph{$L$-infix} of~$w$ is an infix realizing a word of~$L$.

\subparagraph*{Dynamic enumeration problem.} Given a language $L$, the problem
of \emph{dynamic enumeration for $L$-infixes} takes as input a word $w$ and
must build a data structure on~$w$ that will support efficient
\emph{enumeration} of all $L$-infixes of $w$ and be efficiently maintainable
under \emph{letter substitution updates} on~$w$.  We will assume that $L$ is
fixed, so that complexity will only be measured as a function of the length
of~$w$ (which is never changed by updates).

Formally, we consider data structures which must support two operations:
\begin{itemize}
\item \emph{Letter substitution updates} on the current word $w$: the update is
  specified by a
  position $1 \leq i \leq |w|$ and a letter $a \in \Sigma$, and it modifies
  $w$ by replacing its $i$-th character by~$a$.
\item \emph{Enumeration queries}: the query asks us to output all $L$-infixes of the current word
  exactly once. We do not specify in which order the $L$-infixes are
  returned.
\end{itemize}

\begin{example}
  Consider the language $L = a^*$ and the initial word $w = aaa$. An enumeration
  query produces the six infixes $[1,1]$, $[1,2]$, $[1,3]$,
  $[2,2]$, $[2,3]$ and $[3,3]$ (in some order).  An update operation $(2,b)$
  changes $w$ to $aba$. If we now perform again an enumeration
  query, we receive the two infixes $[1,1]$ and $[3,3]$.  
  We can then perform two successive update operations $(1,b)$ and $(3,b)$, changing the
  word to $w = bbb$. If we now perform an enumeration query, we
  receive no results. 
\end{example}

The goal of this paper is to design data structures for the dynamic enumeration
problem for $L$-infixes which efficiently support letter substitution updates
and enumeration queries.  Our focus is on achieving \emph{constant-time
  updates}, and \emph{constant delay} and \emph{constant memory enumeration},
where \emph{constant} means independent from the length of the word~$w$. We
also require \emph{linear preprocessing and linear memory}.  Formally, we mean
that there is an integer $B \in \mathbb{N}$, depending only on~$L$, such that,
for any word~$w$:

\begin{description}
  \item[Constant-time updates.] Each update is processed in at most $B$
    computation steps.
  \item[Constant delay enumeration.] While processing an enumeration query,
    after at most $B$ computation steps, we produce the next $L$-infix (or
    conclude that none are left).
  \item[Constant memory enumeration.] During the processing of an enumeration query, the total
    additional number of memory cells used is bounded by~$B$. (The state of the memory before we started the
    enumeration query can still be used but in a read-only fashion.)

    \item[Linear preprocessing and linear memory.] The running time to build the
      data structure on the initial word is bounded by $B |w|$ steps, and the total number of memory
      cells used by the data structure is always bounded by $B |w|$ (i.e., it
      never exceeds this during the lifetime of the structure, no matter how
      many updates or enumeration queries are performed).
\end{description}

We use the unit-cost RAM model~\cite{grandjean2022which}, with cell size
logarithmic in the word length~$n \coloneq |w|$ (which never changes). For
instance, an integer in range $\{1, \ldots, n\}$ fits in one memory cell.

Note that the constant memory enumeration constraint implies that we can
freely interrupt the enumeration process when an update arrives. Indeed, it ensures
that the data structure used to support the updates is not modified during the 
enumeration phase, so that an incoming update can be applied at any time.
However, after each update, the enumeration status is reset: 
enumeration queries always restart from scratch after an update, which is
sensible because the set of $L$-infixes to enumerate may have changed.

In this paper we study the problem of dynamic enumeration of $L$-infixes for
regular languages~$L$.  We show language families for which we can construct
data structures achieving the requirements above, and languages for which this
is not possible (sometimes conditionally).

\section{Dynamic Membership and \ZG Languages}
\label{sec:dynamic_membership}
\begin{toappendix}
  \label{apx:zg}
\end{toappendix}

The dynamic enumeration problem for $L$-infixes is related to the task of
maintaining membership to $L$ under updates, which is called \emph{dynamic
  membership} and is studied in~\cite{frandsen1997,amarilliICALP21}. In this
section, we review the definition of the dynamic membership problem and some
results from~\cite{frandsen1997,amarilliICALP21}, focusing on a class of
regular languages called \emph{\ZG}. We recall the definition of the
\emph{syntactic monoid} of a language and define \emph{\ZG languages} as those
languages whose syntactic monoid falls in a specific class of monoids, also
called \ZG. We then state the result from~\cite{frandsen1997,amarilliICALP21}
that \ZG languages enjoy constant-time dynamic membership, and that this result
is tight for languages having a neutral letter, conditionally to a complexity
hypothesis.

\subparagraph*{Dynamic membership.}  Consider a language $L$. The \emph{dynamic
  membership} problem is similar to the dynamic enumeration problem mentioned
in the previous section, and asks us to design a data structure that can handle
the same letter substitution updates. However, enumeration queries are replaced
by a membership query that asks whether the current word is in $L$.

It was shown in~\cite{amarilliICALP21} that, for a certain class \QLZG of
regular languages, there is a data structure with linear preprocessing and
linear memory that can \emph{solve dynamic membership in constant time}, i.e., both
letter substitution updates and membership queries are supported in
constant time.
This is analogous to our goal for dynamic enumeration of $L$-infixes.

Further, it is shown in~\cite{amarilliICALP21} that \QLZG is the largest class
of regular languages having this property, conditionally to a computational
assumption called \emph{prefix-$U_1$} that we will review in \cref{sec:lower}.
The definition of the class \QLZG is somewhat technical, so we will review a
simpler subclass of \QLZG from~\cite{amarilliICALP21}, called \ZG, that is
sufficient for our needs. The two classes coincide for languages with a neutral
letter, which are the main focus of this paper.

\subparagraph*{\ZG languages and \ZG monoids.}  The \emph{syntactic monoid}~$M$
of a regular language~$L$ is the monoid obtained by quotienting the words
of~$\Sigma^*$ by the syntactic equivalence relation $\sim_L$ defined as
follows: we have $u \sim_L v$ for $u, v \in \Sigma^*$ whenever we have
$sut \in L$ iff $svt \in L$ for each $s, t \in \Sigma^*$. As the syntactic
equivalence relation $\sim_L$ is a congruence (i.e., when $u \sim_L v$ and
$u' \sim_L v'$ then $uu' \sim_L vv'$), then the concatenation of words gives a
composition law on the equivalence classes of~$\Sigma^*$ under~$\sim_L$.  In
particular, any neutral letter is syntactically equivalent to the empty
word~$\epsilon$, which corresponds to the neutral element of~$M$.  It is
well-known that the syntactic monoid of a regular language is always finite,
and that any finite monoid $M$ has an \emph{idempotent power}, denoted
$\omega$, such that $x^\omega=x^\omega x^\omega$ for all $x \in M$.

The class of monoids \emph{\ZG} consists of all finite monoids $M$ satisfying
the equation $yx^{\omega+1}=x^{\omega+1}y$ for every $x, y \in M$.
The class of languages \emph{\ZG} then consists of
those languages whose syntactic monoid is in \ZG.
Intuitively, \ZG languages are those languages where frequent letters
are \emph{central}, i.e., they commute with all other elements.

\begin{example}
  Examples of \ZG languages are $L_1=(aa)^*$, $L_2=ab$, $L_3 = \{w\in \{a,b\}^* \mid|w|_a\geq 3 \text{ and } |w|_b
  \mod 3 = 0\}$,
$L_4= (a+b)^* c
  (a+b)^* d (a+b)^*$, 
  and $L_5 = L_3 \cap L_4$.
\end{example}

\subparagraph*{Tractability for \ZG languages.}
Let us summarize in the following proposition what is known
from~\cite{amarilliICALP21} about dynamic membership with constant time per
update:

\begin{propositionrep}[Follows from Theorem~1.1 in \cite{amarilliICALP21}, see
  Appendix~\ref{apx:zg}]
  \label{prp:zgupdates}
  Let $L$ be a regular language. If $L$ is in \ZG, then the dynamic membership
  problem to~$L$ can be solved in constant time with linear preprocessing and
  linear memory. Conversely, assuming the prefix\nobreakdash-$U_1$ hypothesis
  (see \cref{sec:lower}), and assuming
  that $L$ features a neutral letter,
  if the dynamic membership problem to~$L$ can be solved
  in constant time, then $L$ is in \ZG.
\end{propositionrep}

\begin{toappendix}
  We show in the rest of this appendix how \cref{prp:zgupdates}
  follows from Theorem~1.1 in~\cite{amarilliICALP21}, which shows the following:

  \begin{theorem}[Special case of Theorem~1.1 in~\cite{amarilliICALP21}]
    Let $L$ be a regular language.
  \begin{itemize}
    \item If $L$ is in \QLZG then the dynamic membership
  problem to~$L$ can be solved in constant time;
\item Assuming the prefix-$U_1$ hypothesis,
  if the dynamic membership problem to~$L$ can be solved
  in constant time, then $L$ is in \QLZG.
  \end{itemize}
  \end{theorem}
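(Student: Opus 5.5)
Since this statement is a special case of Theorem~1.1 of~\cite{amarilliICALP21}, the formal plan is to check that its two items are the two directions of that theorem restricted to letter substitution updates and membership queries. For readability, I would also sketch how each direction is obtained. The two directions are independent: an upper bound by an explicit data structure, and a conditional lower bound by reduction from prefix-$U_1$.

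\textbf{Upper bound.} I would first handle the case where the syntactic monoid $M$ is in \ZG, and then lift it to \QLZG.

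For \ZG, the key algebraic fact is that elements of the form $x^{\omega+1}$ are central. Iterating the equation $yx^{\omega+1}=x^{\omega+1}y$, I would show there is a threshold $k$ depending only on $M$ with the following property. Call a letter \emph{frequent} in $w$ if it occurs more than $k$ times, and \emph{rare} otherwise. Then the image of $w$ in $M$ is determined by two pieces of data: the ordered subword of rare letters, and the counts of frequent letters, each count only up to the threshold plus a residue modulo the period of the letter's image. The intuition is that a frequent letter can be commuted past everything and collected into a central power.

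The data structure keeps, for each letter, an unordered doubly-linked occurrence list together with its size. A substitution update then takes $O(1)$ time: remove the position from one list and append it to another, with back-pointers. A membership query inspects the counts. For each rare letter it reads its at most $k$ positions, sorts these constantly many positions, rebuilds the ordered rare subword, and multiplies in the central contribution of the frequent letters. All of this takes constant time.

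For \QLZG, I would use the stability index $s$. The plan is to view the word as a word over blocks of length $s$, with a bounded boundary correction since $n$ is fixed, and to show that the relevant local monoids $eMe$ are in \ZG. The \ZG algorithm then runs on the block word, and one letter update changes only one block.

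\textbf{Lower bound.} Assume $L$ is not in \QLZG. By the definitions, there are idempotent contexts and elements $x,y$ witnessing the failure $yx^{\omega+1}\neq x^{\omega+1}y$, inside a stable local monoid. Using these witnesses, I would encode an instance of the prefix-$U_1$ problem (maintaining a bit vector and answering whether a prefix contains a $1$) into a word. Each bit becomes a gadget block: a $1$ becomes the non-commuting element $y$, and the queried prefix boundary is marked by where $x^{\omega+1}$ sits. The construction is arranged so that membership of the whole word in $L$ reveals the prefix query answer. Each prefix-$U_1$ update or query then translates into $O(1)$ letter updates followed by one membership query. A constant-time dynamic membership structure would therefore solve prefix-$U_1$ in constant time, contradicting the hypothesis.

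\textbf{Main obstacle.} I expect the hardest part to be the algebra: extracting the threshold $k$ and the exact normal form for \ZG, and in the lower bound, choosing witnesses that work uniformly in the stable semigroup so that the reduction has no dependence on $n$.
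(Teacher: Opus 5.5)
Your formal plan, invoking Theorem~1.1 of the cited ICALP'21 paper, is exactly what the paper does. The statement is quoted there as a black box and gets no proof of its own, so as a citation your proposal is fine. The sketches you add ``for readability'', however, would fail as written in both directions, so you should not present them as how the directions are obtained.

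\textbf{Upper bound.} Cutting the word into blocks of length $s$ is a reasonable first step. The problem is what comes next: the block word is evaluated in the stable semigroup $S'=\eta(\Sigma^s)$. For a QLZG language, $S'$ is only in LZG (its local monoids $fS'f$ are in ZG), not in ZG. So you cannot simply run the ZG algorithm on the block word, because the rare/frequent normal form fails there.

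For instance, take $L=a\Sigma^*$ over $\{a,b\}$. It has $s=1$, and its stable semigroup is the left-zero semigroup $\{\eta(a),\eta(b)\}$, whose local monoids are trivial, so $L$ is QLZG. Yet $a^m b\in L$ and $ba^m\notin L$ have the same rare subword and the same number of $a$'s.

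You need a genuinely semigroup-level ingredient. One option is the locality of ZG, $\mathbf{LZG}=\mathbf{ZG}*\mathbf{D}$ (Amarilli--Paperman). It lets you recode the block word, using windows of bounded length, into a word over a ZG monoid; each substitution then changes boundedly many windows. The paper itself only applies the upper bound to ZG languages, where your algorithm on the original word suffices.

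Even for ZG, iterating the equation does not give the normal form by itself. The occurrences of a frequent letter $a$ are scattered, and $a$ itself need not be central. You must pump to get a factor $z^\omega$ with $z$ containing $a$. Then use $(xy)^\omega=x^\omega y^\omega$ to get $z^\omega=a^\omega z^\omega$, hence a central $a^\omega$. Finally use $a^\omega u a=a^{\omega+1}u$ to gather the $a$'s.

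\textbf{Lower bound.} Putting the witness $y$ on every element of $S$ makes the relevant prefix evaluate to $y^k$ with $k=|S\cap[1,i-1]|$. Powers of $y$ may cycle, so $k=0$ can be indistinguishable from a positive multiple of the period.

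A better choice of witnesses does not always fix this. Take a language with a neutral letter whose syntactic monoid is the non-abelian group $S_3$. It is not in ZG, which equals QLZG here. But any fixed element $g$ placed on the positions of $S$ gives a prefix product $g^k$, which returns to the identity. So no gadget of your kind can detect that the prefix is empty.

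The cited theorem handles the non-QLZG languages in two separate ways:
\begin{itemize}
\item Outside QSG there is an \emph{unconditional} $\Omega(\log n/\log\log n)$ lower bound, by reduction from prefix-$\mathbb{Z}_d$ via Fredman--Saks, as the paper recalls in its discussion of non-ZG languages. This already excludes $O(1)$.
\item Prefix-$U_1$ is only needed for QSG$\setminus$QLZG. There, the SG identity $x^{\omega+1}yx^\omega=x^\omega yx^{\omega+1}$ shows that if $x^{\omega+1}$ is not central in a local monoid, then neither is the idempotent $e=x^\omega$. It is $e$, with $e^k=e$ for all $k\geq 1$, that must encode the elements of $S$, so that every nonempty prefix evaluates to $e$ and is absorbing, as $U_1$ requires.
\end{itemize}
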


  So, let $L$ be a regular language, and 
  let $\Sigma$ be the alphabet over which $L$ is defined.
  Recall that the \emph{syntactic semigroup} $S$ of~$L$ is the
  semigroup obtained by quotienting the \emph{non-empty} words of~$\Sigma^+$ by
  the syntactic equivalence relation. 
  This definition is the analogue of the syntactic monoid~$M$, except that we
  disallow the empty word: $M$ is equal to $S$
  whenever $S$ contains a neutral element (which is in
  particular true if $L$ features a neutral letter), and otherwise $M$
  is equal to $S$ up to adding an identity element.
  Further, the \emph{syntactic morphism} $\eta$ of~$L$
  is the function mapping a word of $\Sigma^*$ to the element of $S$ that
  corresponds to its class in the syntactic equivalence relation.

  We will discuss the classes \LZG and \QLZG from~\cite{amarilliICALP21}. First, 
recall that \LZG is the
  variety of those semigroups for which all submonoids are in \ZG. Further,
  \QLZG are those languages whose so-called \emph{stable semigroup} is in \LZG.
  To define this, we call the \emph{stability index} of~$L$ the value $s \geq 1$
  which is the idempotent power of $\eta(\Sigma)$ in the powerset monoid
    of the syntactic monoid $M$ of~$L$. Then the \emph{stable semigroup} of~$L$
    is $\eta(\Sigma^s)$.
  We are now ready to explain how the proof of \cref{prp:zgupdates} follows from
  Theorem~1.1 in~\cite{amarilliICALP21}:

  \begin{proof}[Proof of \cref{prp:zgupdates}]
  We first show the upper bound, by explaining why a \ZG language $L$ is indeed in the
  class \QLZG. Indeed, the syntactic monoid $M$
  of~$L$ in in \ZG by hypothesis. Now, the stable semigroup $S'$ of~$L$ is
  $\eta(\Sigma^s)$, which is a subsemigroup of $M$. This implies that the
  submonoids of~$S'$ are submonoids of~$M$. As $M$ is in \ZG, and \ZG as a
  variety is closed under taking submonoids, we know that all submonoids of~$S'$
    are in \ZG, so the semigroup $S'$ is in \LZG, and $L$ is in \QLZG as stated.

  For the lower bound, we will show that for a regular language $L$ with a
  neutral letter~$e$, if $L$ is in \QLZG, then $L$ is in \ZG. The
  contrapositive of this result, together with the contrapositive of 
  the second point in our summary above of Theorem~1.1
  in~\cite{amarilliICALP21}, implies the lower bound that we have to establish.
  Note that the image $\eta(e)$ of the neutral letter~$e$ by the syntactic
    morphism~$\eta$ is an identity of the syntactic semigroup $S$. In
  particular this implies that $S$ is a monoid and is actually equal to the
  syntactic monoid $M$ of $L$.

  We first claim that the stable semigroup $S' \coloneq \eta(\Sigma^s)$ of~$L$ is in fact exactly the
  syntactic semigroup $S$ of~$L$. We show both inclusions. First, we have by
  definition that $S' \subseteq S$. Second, any element $x$ of $S$ can be
  achieved as $x = \eta(w)$ for some $w \in \Sigma^+$. As $L$ features a
  neutral letter~$e$, up to padding $w$ to the right with~$e$, we can assume
  that the length of~$w$ is a multiple of the stability index~$s$, and write $w = w_1 \cdots w_k$
  with $|w_i| = s$ for each $i$. Then
  $x = \eta(w) = \eta(w_1 \cdots w_k) = \eta(w_1) \cdots \eta(w_k)$ shows that
  $x \in \eta(\Sigma^s)$. Thus, we have established that $S'=S$.

  Now, by definition of $L \in \QLZG$ we know that its stable semigroup $S'$,
  i.e., its
  syntactic semigroup $S$, is in \LZG.  This means by definition that all local
  monoids of~$S$ are in \ZG. Consider the local monoid $fSf$ of~$S$, where
  $f=\eta(e)$ is the identity of~$S$. The definition implies that $S$ is in \ZG.
  Thus, we have shown that the syntactic monoid $M$ of~$L$, which is equal
  to~$S$, is in \ZG. This concludes the proof.
\end{proof}
\end{toappendix}

\section{Extensible Languages and Simple Infix Enumeration Algorithm}
\label{sec:extensible}

In this section, we introduce \emph{extensible} languages (also known as
\emph{two-sided ideals}~\cite{BrzozowskiY11}) as a simple class of
languages $L$ for which the dynamic enumeration problem of $L$-infixes can be reduced to the dynamic
membership problem in a simple way, in the presence of a neutral letter, and 
up to forgoing our constant memory enumeration requirement.
Let us first define
the notion of extensible languages:

\begin{definition}[Extensible language]
  \label{def:extensible}
  An \emph{extension} of a word $w \in \Sigma^*$ is any word $z \in \Sigma^*$
  such that $w$ is a factor of~$z$, i.e., we have $z = swt$ for some
  choice of $s,t\in\Sigma^*$. 
  A language~$L$ is \emph{extensible} if, for any words $w, z \in \Sigma^*$,
  if $w \in L$ and $z$ is an extension of~$w$ then $z \in L$.
\end{definition}

Equivalently, a language $L$ is extensible iff $L = \Sigma^* L \Sigma^*$.  The
key property of extensible languages is that, by contraposition, we have
$w \not\in L$ iff no infix of $w$ realizes a word of~$L$.  Hence testing
membership of $w$ to~$L$ is equivalent to testing whether the enumeration of
$L$-infixes on~$w$ produces results.

The extensibility of a language~$L$ can be used in two directions. First, by the above,
the dynamic membership problem for~$L$ immediately reduces to the dynamic
enumeration problem for $L$-infixes. We
will use this in \cref{sec:lower} to show lower bounds on the
latter problem. Second, if $L$ contains a neutral letter, we can do the converse
and reduce the dynamic enumeration of $L$-infixes to dynamic membership to~$L$. Namely:

\begin{propositionrep}
  \label{prp:simple}
  Let $L$ be an extensible language featuring a neutral letter. 
  Assume that
  we have a data structure $\Psi$
  for dynamic membership to $L$ 
  with linear preprocessing, linear memory, and constant-time updates and
  membership queries. 
  Then we can build a data structure for the dynamic enumeration problem for
  $L$-infixes with 
  linear preprocessing, linear memory, constant-time updates and constant
  delay enumeration.
\end{propositionrep}

In particular, in view of \cref{prp:zgupdates}, this implies the following:

\begin{corollary}
  \label{cor:zg}
  Let $L$ be an extensible \ZG language. Then the dynamic enumeration
  problem for $L$-infixes
  can be solved with linear preprocessing, linear memory, constant-time
  updates, and constant delay enumeration.
\end{corollary}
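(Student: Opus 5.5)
The plan is to combine \cref{prp:zgupdates} with \cref{prp:simple}. The one gap is that \cref{prp:simple} requires a neutral letter, while \cref{cor:zg} does not assume one. I will close this gap by adjoining a fresh neutral letter.

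\textbf{Construction.} Let $L \subseteq \Sigma^*$ be an extensible \ZG language, and let $e \notin \Sigma$ be a fresh letter. Set $\Sigma' \coloneq \Sigma \cup \{e\}$. Let $\pi \colon \Sigma'^* \to \Sigma^*$ be the morphism that erases $e$ and is the identity on $\Sigma$. Define $L' \coloneq \pi^{-1}(L)$. By construction $e$ is neutral for $L'$, since $\pi(set) = \pi(st)$. I then check two properties of $L'$.

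\textbf{$L'$ is extensible.} If $u \in L'$ and $s,t \in \Sigma'^*$, then $\pi(sut) = \pi(s)\pi(u)\pi(t)$. This is an extension of $\pi(u) \in L$, so it lies in $L$ by extensibility of $L$. Hence $sut \in L'$.

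\textbf{$L'$ is in \ZG.} I claim that for $u,v \in \Sigma'^*$ we have $u \sim_{L'} v$ iff $\pi(u) \sim_L \pi(v)$.
\begin{itemize}
\item For the backward direction, $sut \in L'$ iff $\pi(s)\pi(u)\pi(t) \in L$.
\item For the forward direction, take contexts $s,t \in \Sigma^* \subseteq \Sigma'^*$, on which $\pi$ is the identity.
\end{itemize}
Since $\pi$ is surjective, $\pi$ therefore induces an isomorphism between the syntactic monoid of $L'$ and that of $L$. The equation $yx^{\omega+1}=x^{\omega+1}y$ is preserved under isomorphism, so $L'$ is in \ZG. This identification of syntactic monoids is the only step with real content, and it is routine.

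\textbf{Conclusion.} By \cref{prp:zgupdates}, $L'$ has a dynamic membership structure with linear preprocessing, linear memory, and constant-time updates and queries. Since $L'$ is extensible and has the neutral letter $e$, \cref{prp:simple} yields a structure for dynamic enumeration of $L'$-infixes with the required guarantees.

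Now run this structure on input words over $\Sigma$. Every update in the original problem substitutes a letter of $\Sigma \subseteq \Sigma'$, so it is a valid update for the $L'$ structure. For a word $w \in \Sigma^*$, every factor $w[i,j]$ lies in $\Sigma^*$, where $\pi$ is the identity, so $w[i,j] \in L'$ iff $w[i,j] \in L$. Hence the $L'$-infixes of $w$ are exactly its $L$-infixes. Since $L'$ depends only on $L$, all constants remain independent of $|w|$, which gives \cref{cor:zg}.
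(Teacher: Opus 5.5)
Your proof is correct and follows the paper's route: the paper obtains \cref{cor:zg} in one line by combining \cref{prp:zgupdates} with \cref{prp:simple}. Your extra step, adjoining a fresh neutral letter $e$ and passing to $L' = \pi^{-1}(L)$, is sound. $L'$ stays extensible and has a syntactic monoid isomorphic to that of $L$, so this discharges the neutral-letter hypothesis of \cref{prp:simple}, which \cref{cor:zg} does not assume and which the paper's one-line derivation leaves unaddressed.
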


However, we do \emph{not} achieve constant memory enumeration.  During the
enumeration phase, the algorithm modifies the data structure by simulating
updates. It only restores it at the end of the enumeration.
Even worse, our algorithm cannot perform (real) updates
until the enumeration is over.
We sketch the proof of
\cref{prp:simple} below (see Appendix~\ref{apx:simple_algorithm}):

\begin{proofsketch}
  The data structure, denoted $\Psi$, is the one used to handle updates while
  maintaining dynamic membership to~$L$.  We start enumeration by querying
  $\Psi$ to determine if the current word $w$ belongs
  to~$L$. If not, $w$ has no $L$-infixes and we are done. Otherwise we can
  produce the infix $[1,|w|]$. Then we decrement the right endpoint and perform
  substitution updates (by the neutral letter) in~$\Psi$ to remove
  letters one by one and produce infixes of the form $[1,r]$. We stop when the current infix is
  no longer in~$L$ according to~$\Psi$. Then we add back the letters of~$w$
  to~$\Psi$
  and increment back~$r$ until reaching $|w|$.
  To ensure constant delay, we use the standard even-odd technique~\cite{uno2003two}:
  we produce infixes with even~$r$
  as we are decrementing $r$ and those with odd~$r$ as we are incrementing~$r$.
  We continue similarly for successive values $l \geq 1$ of the
  left endpoint, by increasing order; again the even-odd technique can be used
  to ensure that $\Psi$ is back to its initial state at the end of the
  enumeration.
\end{proofsketch}

In \cref{sec:main}, we provide a more
complex algorithm that fulfills our constant memory enumeration, and which
further applies to a larger class
of \emph{semi-extensible ZG languages}.

\begin{toappendix}
\label{apx:simple_algorithm}

\newcommand\update{\textsc{update}}
\newcommand\test{\textsc{test}}

\begin{algorithm}[p]
  \caption{Dynamic enumeration problem of $L$-infixes for an extensible language $L$ with
  neutral letter $e$, assuming an oracle $\Psi$ for dynamic membership to~$L$.}
\label{alg:simple}
\begin{algorithmic}[1]
\Require
  Word $w$, and dynamic membership data structure $\Psi$ (with
  $\Psi.\update$ and $\Psi.\test$ methods) initialized with~$w$.
\Ensure
  All the $L$-infixes of $w$
\For{$l \gets 1$ to $|w|$}
\Comment{increase left endpoint $l$ starting at 1}
    \If{$l$ is even} \Comment{produce infixes with even left endpoint when increasing $l$}
      \State $b \leftarrow$ \Call{EnumerateL}{$l$}  \Comment{see procedure
      \textsc{EnumerateL} below}
      \LComment{if left endpoint $l$ returned no infix at all, start moving backwards immediately}
      \If{\textbf{not}~$b$}
        \State \textbf{break}
      \EndIf
    \EndIf
    \State $\Psi.\update(e,l)$\Comment{Substitute $e$ at position $l$ in~$\Psi$}
\EndFor
\For{$l' \gets l$ to $1$}
\Comment{decrease left endpoint $l'$ starting at $l$}
    \State $\Psi.\update(w[l'],l')$\Comment{restore position $l'$ in~$\Psi$}
    \If{$l'$ is odd} \Comment{produce infixes with odd left endpoint when
    decreasing $l'$}
      \State \Call{EnumerateL}{$l'$} 
    \EndIf
\EndFor
  \LComment{We have called \textsc{EnumerateL}($l'$) for each $l'$ which may
  return results, and $\Psi$ is back to its previous state: we are done}
  \State \Return
  \Statex

\Procedure{EnumerateL}{$l$}
  \Statex ~~~~\,\textbf{Input:}
  Word $w$, position $l$, data structure $\Psi$ initialized with $e^{l-1} w[l,|w|]$
  \Statex ~~~~\,\textbf{Output:}
  Output all the $L$-infixes of $w$ with left endpoint $l$, and return a
  Boolean\\\null~~~~\,\phantom{\textbf{Output:}} to indicate if some $L$-infixes have been enumerated
    \State $r \gets |w|$
    \If{\textbf{not}~$\Psi.\test()$}
        \LComment{$w[l,|w|] \notin L$, so by
        extensibility we know that all infixes with smaller right endpoint are not in~$L$}
        \State \Return \textbf{false} 
    \EndIf
    \LComment{decrease right endpoint $r$, doing modifications in~$\Psi$,
      and output infixes with even~$r$}
    \While{$r \geq l$ and $\Psi.\test()$}
      \If{$r$ is even}
        \State \Call{Output}{$[l,r]$}
      \EndIf
      \State $\Psi.\update(e,r)$\Comment{substitute $e$ at position $r$}
        \State $r \gets r - 1$
    \EndWhile
    \LComment{increase right endpoint $r$,
      undoing the modifications in~$\Psi$, and output infixes with odd $r$}
    \While{$r \leq |w|$}
        \State $r \gets r + 1$
        \State $\Psi.\update(w[r],r)$\Comment{restore $w[r]$ at position~$r$}
        \If{$r$ is odd}
          \State \Call{Output}{$[l,r]$}
        \EndIf
    \EndWhile
    \State \Return \textbf{true}  \Comment{at least one infix produced}
\EndProcedure
\end{algorithmic}
\end{algorithm}

In the rest of this appendix, we prove \cref{prp:simple}. The complete algorithm
  is presented as \cref{alg:simple}.

Let $L$ be an extensible language featuring
a neutral letter $e$, and assume that we have a structure $\Psi$ for dynamic
  membership to~$L$ which stores the current word~$w$ and is maintained up to
  date as~$w$ is updated. To be precise, $\Psi$ provides a method $\Psi.\update(a,i)$ that updates the
data structure to substitute the letter $a$ at position $i$ in the word that it
  stores, and a method
$\Psi.\test()$ that checks whether the current word stored by~$\Psi$ is in $L$.

  We implement the requisite data structure for dynamic enumeration
  of $L$-infixes by performing the same preprocessing
  as that of~$\Psi$, and we support updates by reflecting them directly
  in~$\Psi$. All that remains is to show how to perform $L$-infix enumeration
  using the structure~$\Psi$.

When processing an enumeration query, we will perform updates on~$\Psi$ to reflect the
infixes that we are considering in the enumeration: specifically we will
erase some letters of~$w$ by substituting them by~$e$, and later substitute back
the original letters of~$w$.
Note that the underlying
word $w$ is not modified during the enumeration, so $\Psi$ intuitively goes ``out of sync'' with~$w$
during the enumeration: during that time, $\Psi$ reflects the status of other words
  of length $|w|$ which intuitively correspond to the infixes produced -- with
  letters outside the infix in question having been replaced by the neutral letter~$e$.

The intuition for the algorithm is now simple: we enumerate all infixes $[l,r]$
  sorted first by decreasing order
for $r$ and then by increasing order for $l$. When the infix $[l,r]$ is considered, the
current word is such that the neutral letter $e$ is present at all positions
$i<l$ or $i>r$. Hence testing whether the infix $[l,r]$ is in $L$ amounts to a
membership test for the entire word stored by~$\Psi$, using $\Psi.\test()$. If the test is positive
we output $[l,r]$, and continue with $[l,r-1]$ after placing $e$ at position
$r$ using $\Psi.\update(e,r)$.  If the test is negative we know that no infix of
$[l,r]$ can be in $L$ by extensibility, and we can start again with the infix
$[l+1, |w|]$ after placing $e$ at position $l$ using $\Psi.\update(e,l)$.

This simple idea does not work as-is, because when we go from $[l,r]$ to $[l+1,|w|]$
the current word has the letter $e$ at all positions $i>r$ where we would need
to have the initial letters back. As it would take superconstant time to restore
all these letters, we use the even-odd technique of~\cite{uno2003two} mentioned
in the proof sketch. Specifically, when $r$ is decreased we only output
the infixes with even $r$ until we reach the time when $[l,r]$ is not in
$L$. We then increase $r$, outputting the infixes with odd $r$ while
restoring the initial letters of the word.

Further, as the acute reader has probably noticed, the algorithm as described
  would terminate with the data structure $\Psi$ storing a word with letters $e$
  in its prefix, whereas we want to restore $\Psi$ to its correct state
  when we are done processing the enumeration query. This can be handled using for $l$
the same even-odd trick that has been used for $r$, restoring the initial word
step by step. Note that it is important that, when incrementing $l$, we stop as
  soon as we reach a value for which $w[l,|w|]$ is not in~$L$ (according
  to~$\Psi$), and start decrementing $l$ at that point. Specifically, we must
  not wait for $l$ to
  take value $|w|-1$ before we start to decrement, as the delay could then be linear in~$|w|$. We have now
  converged to the algorithm as presented in \cref{alg:simple}.

  To show correctness, it suffices to observe that the
enumeration algorithm outputs an infix every $B$ steps for some constant~$B$,
which uses the fact that the membership test of~$\Psi$ is in constant time. Hence the delay is, up to a
constant factor, the time taken by $\Psi.\test$ and $\Psi.\update$. What is more, the
algorithm is correct, because for every outputted infix the factor of~$w$
that it realizes is a word of~$L$ (according to~$\Psi$,
which stores a word which is identical to the outputted infix up to neutral
letters). Further every infix of~$w$ is considered to be potentially outputted (at a point which
depends on the parity of its left and right endpoints), unless we have already
witnessed that an extension of the factor realized by this infix is not in~$L$
according to~$\Psi$. This establishes the correctness of the algorithm and
concludes the proof of \cref{prp:simple}.

\end{toappendix}

\section{\THreshext Languages}
\label{sec:threshext}
\begin{toappendix}
\label{apx:threshext}
\end{toappendix}

We now define our 
class of \emph{\threshext languages}, which
generalize extensible \ZG languages.
We present three equivalent definitions:
one algebraic definition involving the \ZG equation, one
operational definition 
used in our algorithms, and one
self-contained definition. 
We then give our algorithm for dynamic enumeration of $L$-infixes in the next
section.

\subparagraph*{Algebraic definition.}
We give a first definition via the \ZG equation (recall that $\omega$ denotes
the idempotent power of $M$):

\begin{definition}[\Threshext language, algebraic def.]
  \label{threshdef-eq}
  A regular language $L$ is \emph{\threshext} if it is in ZG, i.e. its syntactic
  monoid $M$ satisfies the \ZG equation:
   \begin{align}
     \tag{ZG}
     \text{for all~} x,y \in M, \text{~we have~} y x^{\omega+1} &= x^{\omega+1} y\,,\label{eq:ZG}
   \end{align}
   and further $L$ is \emph{semi-extensible} in the following sense:
   \begin{align*}
     \text{for all words~} x, y, z \in \Sigma^* \text{~and all non-neutral letters~} a,
     \text{if~} y \in L \text{~then~} x y z a^\omega \in L\,.
   \end{align*}
   \end{definition}

   Extensibility implies semi-extensibility, so \threshext
   languages generalize extensible \ZG languages which we studied in the
   previous section. The inclusion is strict: the language $L$ on alphabet
   $\Sigma = \{a,e\}$ that has neutral letter~$e$ and consists of the words
   with no $a$'s or at least two $a$'s is \threshext but not extensible.  It is
   also easy to deduce from the definition that
\threshext languages are always aperiodic, i.e., their syntactic monoid
satisfies the equation $x^\omega = x^{\omega+1}$.
We will give an example of a semi-extensible \ZG language later in this section
once we have stated our three equivalent definitions of the class.

\subparagraph*{Operational definition.}
We give another definition of \threshext languages for our
enumeration algorithms.
In the definitions below, for any subalphabet $S \subseteq \Sigma$ and word
$u \in \Sigma^*$, we denote by~$u_{-S}$ the word over
$\Sigma \setminus S$ obtained by removing all letters in~$S$ from~$u$. (In
particular if $S = \Sigma$ then we have $u_{-S} = \epsilon$ for any~$u$.)

\begin{definition}[\Threshext language, operational def.]\label{threshdef-condp}
  Let $L$ be a regular language. For each non-empty set $S$ of non-neutral
  letters for $L$, let $\cond(S) \subseteq \Sigma^*$ be the (regular) language
  of all words $u$ for which there is a word $v\in L$ such  that $v_{-S}$ is a factor of~$u_{-S}$.

  Then the language $L$ is \threshext if there is a number $p \in \NN$, called
  the \emph{threshold}, satisfying the following.
  We require that for each word $u \in \Sigma^*$,
  letting $T$ be the set of non-neutral letters $a$ of~$\Sigma$ such that
  $|u|_a \geq p$ ,
  we have either $T = \emptyset$ or we have that:
  \[u_{-T} \in \cond(T) \text{ iff } u\in L\]
\end{definition}

In the definition above, notice that 
for any non-empty set $S$ of non-neutral letters, we have that
all letters of $S$ are neutral for
$\cond(S)$,
and further
if $u_{-S}\notin \cond(S)$ then $u\not\in L$.
Indeed, one can show the
contrapositive by taking $v=u$ in the definition of $\cond(S)$.

\subparagraph*{Self-contained definition.}
We last give a self-contained definition:

\begin{definition}[\Threshext language, self-contained def.]\label{threshdef-mon} A regular language is \threshext if there exists a number $p \in \NN$
  such 
  that the following holds for all words $u, v \in \Sigma^*$: letting $T$ be
  the set of non-neutral letters $a$ of~$\Sigma$ such that $|u|_a \geq p$, if $v \in L$ and
  $T$ is non-empty and $v_{-T}$ is a factor of $u_{-T}$ then $u\in L$.
\end{definition}

We give an example illustrating the power of \threshext languages: 
\begin{example}
  \label{exa:threshext}
  Consider the language $L_{ab}$ on alphabet $\Sigma = \{a, b, e\}$
  that has $e$ as a neutral letter and contains the words $u \in
  \Sigma^*$ satisfying one of the following: (1.) $u_{-\{e\}} = ab$
  (i.e., $u$ is exactly the word $ab$ up to the neutral letters); or (2.) $|u|_a \geq 3$
  ($u$ has a least 3 $a$'s); or (3.) $|u|_b \geq 3$ ($u$ has at least 3 $b$'s).
  The language $L_{ab}$ is not extensible (e.g., $ab \in L_{ab}$ but $abb \notin
  L_{ab}$), but it is \threshext with threshold $p=3$. 
    Indeed, by the characterization of \cref{threshdef-mon},
  let $v, u \in \Sigma^*$ be words such that
    $v \in L_{ab}$, such that $T = \{a \neq e \mid |u|_a \geq p\}$ is non-empty,
    and such that $v_{-T}$ is a factor of $u_{-T}$. We have $u \in L_{ab}$ because
    $u$ contains at least $3$ occurrences of a non-neutral letter (i.e., from
    the nonempty set~$T$).

  The languages $\cond$ in \cref{threshdef-condp} for
  $L_{ab}$ are simply
  $\cond(\{a,b\})=\cond(\{a\})=\cond(\{b\})=\Sigma^*$
  because all words are in $L_{ab}$ when some non-neutral letter occurs
  $\geq 3$ times.
\end{example}

\subparagraph*{Equivalence of definitions.}
We can show in Appendix~\ref{apx:threshext}
that the three definitions are equivalent:

\begin{propositionrep}\label{thm-threshold-monotone-def-equiv}
  \cref{threshdef-mon,,threshdef-condp,,threshdef-eq} define the same family of
  languages.
\end{propositionrep}

\begin{proof}
  We show the following implications: \cref{threshdef-mon} $\Rightarrow$
  \cref{threshdef-condp} $\Rightarrow$
  \cref{threshdef-eq} $\Rightarrow$ \cref{threshdef-mon}. For an integer $p \in
  \NN$ and a word~$w$, we partition
  $\Sigma$ into neutral letters, rare letters, and frequent letters, as was
  explained in the main text above \cref{lem:condefficient}.

  \medskip

  $\bullet$ {\bf \cref{threshdef-mon} $\Rightarrow$ \cref{threshdef-condp}.}

  Assume $L$ satisfies \cref{threshdef-mon}, and let $p \in \NN$ be a
  threshold.
  We show that the condition of \cref{threshdef-condp} is satisfied for this value of $p$.

  Let $u$ be a word and let $T \subseteq \Sigma$ be the set of frequent
  letters in~$u$, i.e., for each non-neutral $a \in \Sigma$ 
  we set $a \in T$ iff $|u|_a \geq p$. Assume $T\neq \emptyset$ as there is
  nothing to show otherwise.

  If $u_{-T} \not\in \cond(T)$, then we conclude that $u \notin L$ by
  construction of $\cond(T)$, as explained just after \cref{threshdef-condp}.

  If $u_{-T} \in \cond(T)$ then by definition of $\cond(T)$ there exists a word $v\in L$ such
  that $v_{-T}$ is a factor of $u_{-T}$. We get $u\in L$ by \cref{threshdef-mon}.

  \medskip

  $\bullet$ {\bf \cref{threshdef-condp} $\Rightarrow$ \cref{threshdef-eq}.}

  Assume $L$ satisfies \cref{threshdef-condp}, and let $p \in
  \NN$ be a threshold. We need to show that the
  conditions of \cref{threshdef-eq} holds.
  Let $M$ be the syntactic monoid of~$L$,
  let $\alpha$ be the syntactic morphism
  of $L$,
  and let $\omega$ be the idempotent
  power of~$M$.
  Without loss of
  generality, up to replacing $\omega$ by a multiple which is also an
  idempotent power, we can assume that $\omega>p$.
  
  To show \cref{eq:ZG} of \cref{threshdef-eq}, let
  $X,Y \in \Sigma^*$ be words such that $\alpha(X)=x$ and $\alpha(Y)=y$, let
  $u,v\in\Sigma^*$ be arbitrary words, and 
  consider $w=uYX^{\omega+1} v$ and $w'=uX^{\omega +1}Yv$. We must show that $w
  \in L$ iff $w' \in L$.
  Let $T$ be the set of 
  frequent letters of $w$ for threshold~$p$. Notice
  that 
  $T$ is also the set of frequent non-neutral letters of $w'$.

  Let $\hat w$ and $\hat w'$ be the words constructed from $w$ and $w'$ by
  removing the neutral letters, in particular $w\in L$ iff $\hat w\in L$ and
  $w'\in L$ iff $\hat w'\in L$. Notice also that by construction
  $\hat w_{-T}=\hat w'_{-T}$, because all letters in the factor $X^{\omega+1}$
  occur at least $p$ times, so either they are neutral or they are in~$T$.
  Further, if $T = \emptyset$ then it means $X$ consists only of neutral
  letters, and this implies $\hat w=\hat w'$ in which case we immediately have
  $w \in L$ iff $w' \in L$. Hence in the sequel we assume that $T \neq
  \emptyset$.
  
  By \cref{threshdef-condp} we also have $\hat w_{-T} \in \cond(T)$ iff $\hat
  w\in L$ and $\hat w'_{-T} \in \cond(T)$ iff $\hat w'\in L$.
  Hence we have $w \in L$ iff $\hat w \in L$ iff $\hat w_{-T} \in \cond(T)$ iff
  $\hat w'_{-T} \in \cond(T)$ iff $\hat w' \in L$ iff $w' \in L$.  This chain of
  equivalences establishes that \cref{eq:ZG} holds.
  
  To show the semi-extensibility property on $L$, consider $w\in L$.
  Let $u,v \in \Sigma^*$ be words and let $a$ be a non-neutral letter. We need to show that $uwva^\omega \in
  L$. Let $T$ be the set of frequent letters of $uwva^\omega$, and note that $T$ is not empty
  as it contains $a$. Further, the word $w_{-T}$ is a factor of
  $(uwva^\omega)_{-T}$ and we have $w\in L$.
  This implies that $(uwva^\omega)_{-T} \in \cond(T)$, so $uwva^\omega \in L$ 
  by \cref{threshdef-condp}
  as desired.

  \medskip
  
  $\bullet$ {\bf \cref{threshdef-eq} $\Rightarrow$ \cref{threshdef-mon}.}

  Assume that $L$ satisfies the conditions of \cref{threshdef-eq}. Take
  $p$ to be greater than the cardinality of the syntactic monoid of~$L$.
  Let $w$ be a word and let $T$ be its set of frequent non-neutral
  letters. Assume $T$ is not empty, as otherwise there is nothing to show.
  Let $u\in L$ and assume that $u_{-T}$ is  a factor of $w_{-T}$. We need to show
  that $w\in L$.

  Let $a\in T$.
  As $|w|_a \geq p$, we can write $w=w_0aw_1aw_2\cdots a
  w_k$, for $k>p$. As $k$ is greater than $p$ which is greater than the
  cardinality of the syntactic monoid, a classical pumping argument
  using the pigeonhole principle shows that $w
  \in L$ iff $w'\in L$ where $w'=w_0a\cdots aw_{i-1}(aw_i\cdots aw_j)^\omega
  aw_{j+1}\cdots $.

  Let us now observe that it follows from \Cref{eq:ZG} that $(xy)^\omega=x^\omega
  (xy)^\omega$. Indeed, it is known 
  by \cite[Lemma 3.8]{amarilli_paperman_ZG} that the following (*) is true:  $(xy)^\omega = x^\omega
  y^\omega$. Thus, we have $(xy)^\omega = (xy)^\omega(xy)^\omega =
  x^\omega y^\omega (xy)^\omega$ by (*), and this is equal to $x^\omega x^\omega
  y^\omega (xy)^\omega$ which by (*) again is equal to $x^\omega (xy)^\omega
  (xy)^\omega = x^\omega (xy)^\omega$, establishing the desired equation 
  $(xy)^\omega=x^\omega (xy)^\omega$.

  From this and \Cref{eq:ZG},
  letting $T_1=T\setminus\{a\}$,
  it follows that $w'\in L$ iff $w''\in L$, where
  $w''= w_{1} a^{\omega}$ and where $w_1$ is constructed from $w_{-T}$ by adding
  an occurrence of $a$ wherever it is needed for $u_{-T_1}$ to be a factor of
  $(w_1)_{-T_1}$. The reason is that, intuitively, the equation from the
  previous paragraph allows us to add a factor $a^\omega$, which by \Cref{eq:ZG}
  is central. Using the aperiodicity of the syntactic monoid of~$L$ (which
  follows from \Cref{eq:ZG} and semi-extensibility, as explained after
  \cref{threshdef-eq}), the $a^\omega$ allows us to collect all occurrences
  of~$a$ and insert $a$'s wherever needed, without affecting membership to~$L$.
  So indeed we can lift our hypothesis ``$u_{-T}$ is a factor of~$w_{-T}$'' to
  ``$u_{-T_1}$ is a factor of~$(w_1)_{-T_1}$'', 

  Repeating this argument for all the remaining letters of $T$ eventually
  yields a word $v$ such that $w\in L$ iff $v \Pi_{a \in T} a^\omega \in L$ and
  where $\hat u$ is a factor of $\hat v$, for $\hat u$ and $\hat v$ 
  respectively obtained from~$u$ and~$v$ by removing the neutral letters.
  
 From this and from the semi-extensibility of $L$ (given that $T \neq
 \emptyset$) we get that $v\Pi_{a \in T} a^\omega  \in
 L$, hence $w \in L$, which concludes.
 \end{proof}

\subparagraph*{Properties of \threshext languages.}
We close the section with two immediate properties 
that we will use
in our algorithms in the
next section. We focus on the operational definition,
\cref{threshdef-condp}, which we use in the sequel.

Based on \cref{threshdef-condp}, let $L$ be a \threshext language, 
and pick a suitable threshold $p \in \NN$. Up to increasing $p$, we assume without loss of
generality that $p\geq 2$. 
When considering a word $w
\in \Sigma^*$, we will partition the alphabet $\Sigma$ into three kinds of
letters: the \emph{neutral} letters (which have no impact for membership
to~$L$), the
\emph{rare} letters $a \in \Sigma$ such that $|w|_a < p$, and the
\emph{frequent} letters $a \in \Sigma$ such that $|w|_a \geq p$.

We first observe that membership to the languages $\cond(S)$ in \cref{threshdef-condp} can be
verified in constant time on words with a small number of non-neutral letters:

\begin{lemmarep}
  \label{lem:condefficient}
  Consider a \threshext language $L$ with threshold $p$. We can
  precompute a data structure allowing us to do the following in constant time:
  given a non-empty set $S \subseteq \Sigma$ of non-neutral letters, given a
  word $w$ on alphabet $\Sigma \setminus S$ with at most $p$ occurrences of each letter, determine in constant time whether $w \in \cond(S)$.
\end{lemmarep}

\begin{proofsketch}
  We simply tabulate explicitly the set $\cond(S)$ on words with at most $p$
  occurrences of each non-neutral letter, as their size is constant.
\end{proofsketch}

\begin{proof}
  Remember that we measure complexity only as a function of the length of the
  input word, with the language $L$ being fixed. In particular, the threshold
  $p$ is also fixed. So the result is immediate as we can prepare a table of constant size which tabulates
  the answer to all possible queries: for each all non-empty subsets $S \subseteq
  \Sigma$ of non-neutral letters, for each word $w$ on $\Sigma \setminus S$
  with at most $p$ occurrences of each letter,
  we simply store the information of whether $w \in \cond(S)$.
\end{proof}

Second, we show that membership to $\cond(T)$, for $T$ the set of frequent
letters of a word~$w$, is a necessary condition for $w$ to contain factors
belonging to~$L$. This allows us to stop enumeration early whenever the
current infix realizes a factor $w \notin \cond(T)$:

\begin{lemmarep}
  \label{lem:condmonotone}
  Consider a \threshext language $L$.
  Let $w$ be a word and $T$ be the set of frequent letters of $w$.
  If $T \neq \emptyset$ and $w \not\in \cond(T)$, then no factor
  of~$w$ belongs to~$L$.
\end{lemmarep}

\begin{proof}
  Assume by contradiction that $w$ has a factor $y$ such that $y$ belongs
  to~$L$. In particular, $y_{-T}$ is a factor of $w_{-T}$.
  Now, by definition of $\cond(T)$,
  taking $v\coloneq y$ in the definition, we have $w \in \cond(T)$, which
  contradicts our assumption.
\end{proof}

\section{Infix Enumeration Algorithm for \THreshext Languages}
\label{sec:main}

Having defined \threshext languages, we show the following in this section.
This is the main technical result of the paper, and generalizes \cref{cor:zg}:

\begin{theorem}
  \label{thm:main}
  Let $L$ be a \threshext language. Then the dynamic enumeration problem of
  $L$-infixes 
  can be solved 
  with constant-time updates, constant delay enumeration, constant-memory
  enumeration, linear preprocessing, and linear memory.
\end{theorem}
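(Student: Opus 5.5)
We plan to maintain, for each letter $a \in \Sigma$, an \emph{occurrence list}: a doubly-linked list of the positions $i$ with $w[i]=a$, together with an array giving for each position a pointer to its cell in the list of its letter, and a counter $|w|_a$. A substitution update at position $i$ then takes constant time: unlink the cell from the old letter's list, insert it into the new one, and adjust the two counters. Preprocessing builds these lists in linear time, and the structure uses linear memory. The enumeration phase only reads this structure, so constant additional memory amounts to a constant number of pointers and counters, which also makes interrupting the enumeration on an update harmless.

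For enumeration, we would sweep the left endpoint $l$ from $1$ to $n$ and, for each $l$, enumerate suitable right endpoints $r$. The key is \cref{threshdef-condp} combined with \cref{lem:condefficient,lem:condmonotone}. For a fixed infix $[l,r]$, the alphabet splits into neutral, rare (fewer than $p$ occurrences in $w[l,r]$) and frequent letters. Membership depends only on the set $T$ of frequent letters and on $w[l,r]_{-T}$. When $T=\emptyset$ this is a word with fewer than $p$ occurrences of each non-neutral letter, hence of constant length. When $T\neq\emptyset$ it is again a word with fewer than $p$ occurrences of each letter, so the table of \cref{lem:condefficient} answers in constant time.

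As $r$ grows from $l$, the pair $(T, w[l,r]_{-T})$ changes only at positions that are among the first $p$ occurrences after $l$ of some non-neutral letter. There are at most $p|\Sigma|$ such \emph{breakpoints}. Between two consecutive breakpoints, the only new letters are neutral or already frequent, so membership is constant over the whole interval, and each interval $[l,r]$ in it can be output at constant delay. After the last breakpoint every non-neutral letter present is frequent, so membership stays constant up to $r=n$. Moreover, by \cref{lem:condmonotone}, once $T\neq\emptyset$ and the condition fails we may stop extending $r$.

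The remaining task is to maintain, with constant memory, the left information: for the current $l$, the next $p$ occurrences of each letter to the right of $l$. Once these breakpoints are known, we output the rest of the infixes for this $l$ by scanning $r$ across the non-breakpoint positions.

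The main obstacle is that the occurrence lists are unordered, so "the next occurrence of $a$ after $l$" is not available in constant time, and we cannot sort. We would handle rare letters (globally fewer than $p$ occurrences) by reading their whole lists in constant time. For globally frequent letters, we would run background traversal processes, amortized against the outputs.

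The delicate part is charging these traversals to outputs. Semi-extensibility should help here: when $w[l,r]$ contains some letter at least $p$ times, membership is monotone in the relevant sense. So the difficult infixes are essentially those with few non-neutral letters, and those are short up to neutral letters.

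The second difficulty is stretches of $l$ that produce no output. Here we would use \cref{lem:condmonotone} to jump $l$ forward in one go: once $w[l,n]\notin\cond(T)$ with $T\neq\emptyset$, nothing further starting at or after $l$ is an $L$-infix. Combined with even-odd style interleaving to keep the delay constant, this is the step we expect to require the most care.
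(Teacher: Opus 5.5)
Your proposal has a genuine gap: it leaves open the one step that carries the content of the theorem, and the invariant you choose for that step cannot be maintained with constant delay. The parts you make precise are correct and agree with the paper. These are the occurrence lists with constant-time updates, the observation that membership of $w[l,r]$ is piecewise constant in $r$ with at most $p|\Sigma|$ breakpoints, constant-time $\cond$ tests via \cref{lem:condefficient}, and early termination via \cref{lem:condmonotone}. The open step is what you describe as ``background traversals, amortized against the outputs''. Your invariant for it, the first $p$ occurrences after $l$ of \emph{every} letter, fails on the following instance. Take $L=\{u \mid |u|_a\ge 2,\ |u|_b\ge 1\}$ with neutral letter $e$ (commutative, extensible, threshold $p=2$), and $w=(ae^m)^k b$ with $m,k\approx\sqrt n$. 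Every left endpoint $l$ has at most one $L$-infix, namely $[l,n]$. Yet already for $l=1$ your invariant asks for the second $a$ of $w$, at position $m+2$. Finding it costs either a scan of $m$ cells of $w$ or a traversal of the unordered list $\Lambda_a$ of size $k$, and this must happen before anything has been output. Moreover these positions are never needed: $a$ stays frequent for almost all $l$, so it is erased in $w[l,r]_{-T}$. What you actually need is only the $<p$ occurrences of \emph{rare} letters in the current window. The real difficulty, which your plan does not address, is to say which traversal computes these occurrences, relative to which reference point, and why it is almost finished exactly when its result is required.

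The paper resolves this with ideas your plan lacks. For each $l$ it sweeps $r$ \emph{downward} from $|w|$. While some letter is frequent in $w[l,r]$ and $w[l,r]_{-T}\in\cond(T)$, the infix $[l,r]$ is an $L$-infix by \cref{threshdef-condp}, so every step of the sweep is paid for by an output. When the loop exits, either \cref{lem:condmonotone} lets the algorithm move on to $l+1$ (or stop), or all letters are rare and \cref{lem:easycase} finishes. For $l=1$, a letter becomes rare only once the sweep has passed all but $<p$ of its occurrences. Hence a single traversal of $\Lambda_a$, advanced one step per instruction, is nearly done exactly when its result is needed (\cref{clm:finishmin}). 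For $l>1$ the window moves, so the paper uses one max-left traversal per letter anchored at the \emph{fixed} limit $r_1$. It finishes in time because the $l$-sweep and the $r$-sweep together pass all but $<p$ occurrences of the letter. The gap between $r_1$ and the current $r$ is bridged by the left information between $r_1$ and $r_{l-1}$ (\cref{def:li}) and the right information recorded during the previous sweep (\cref{def:ri}). This bridging works because limits are nondecreasing. It also needs both the new limit $r_l$ and every position where a letter becomes rare to lie among the first $p$ occurrences of some letter right of $r_{l-1}$ (\cref{lem:boundedjump}, \cref{clm:rareincomplete}). The constant-size right information tracks exactly those positions. In this scheme every $l$ before termination outputs $[l,|w|]$, so the output-free stretches of $l$ and the even-odd interleaving you anticipate never arise.
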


In the rest of this section, we prove \cref{thm:main}.
We let $w$ be the word on which the algorithm runs, and $n \coloneq |w|$ be its
length, which never changes.
We first present the
preprocessing and the support for update operations in \cref{sec:occlists}, 
with the classical data structure of \emph{occurrence lists}. The main technical
challenge is to support 
$L$-infix enumeration queries with constant delay and constant 
memory, which we describe in \cref{sec:algo}.

\subsection{Preprocessing and Updates with Occurrence Lists}
\label{sec:occlists}
\begin{toappendix}
  \subsection{Preprocessing and Updates with Occurrence Lists}
  \label{apx:occlists}
\end{toappendix}

Occurrence lists are a standard data structure which store
elements in an array of size~$n$, coupled with a doubly-linked list to
efficiently iterate over the elements in insertion order (similar, e.g., to the
LinkedHashSet data structure in Java). Formally:

\begin{definition}
  \label{def:occurrencelist}
  An \emph{occurrence list} is a data structure $\Lambda$ that stores a set $S$ of integers of
  $\{1, \ldots, n\}$ and supports the following operations:
  \begin{itemize}
    \item Insert: given $i \in \{1, \ldots, n\} \setminus S$,
      add $i$ to~$S$;
    \item Delete: given $i \in S$, remove $i$ from~$S$;
    \item Count: get the size $|S|$ of~$S$;
    \item Retrieve: enumerate all the elements currently in~$S$.
  \end{itemize}
\end{definition}

The following result is well known, but for self-containedness we provide a proof
in Appendix~\ref{apx:occlists}.

\begin{propositionrep}[Folklore]
  \label{prp:occlist}
  We can implement occurrence lists to take memory $O(n)$ and to ensure that all operations take $O(1)$
  computation steps, except the Retrieve operation which enumerates the
  elements with $O(1)$ delay.
\end{propositionrep}

We clarify that the data structure stores elements 
in insertion order and not in increasing order, so in particular 
it cannot be used as a
constant-time priority queue.

\begin{proof}
An occurrence list can be implemented by a doubly-linked list $\Lambda'$ together with an
array $T$ of size~$n$.
The list $\Lambda'$ contains the elements of~$\Lambda$ in insertion order,
  and for each element $i \in
\{1, \ldots, n\}$ the $i$-th cell $T[i]$ of the array $T$ contains a pointer to the list
item of~$\Lambda'$ that represents the element~$i$. The other cells are empty. We additionally maintain the count of elements
separately for the Count operation.

  Then, we can implement Insert by adding the
provided element $i$ as a new list item in $\Lambda'$ and making $T[i]$ point to that
item, we can implement Delete by looking up $T[i]$ for the provided element $i$
to remove its list item from $\Lambda'$ (in constant time because $\Lambda'$ is a
doubly-linked list) and clearing $T[i]$, and we can perform Retrieve by
iterating over $\Lambda'$.
\end{proof}

\subparagraph*{Preprocessing and updates.}
We can now explain how occurrence lists are used for the algorithm that we
use to show \cref{thm:main}.
The algorithm will use an occurrence list $\Lambda_a$ for each
letter $a \in \Sigma$, to store
the set of positions of $w$ that contain the letter
$a$. We denote by $\Lambda$ the tuple consisting of all occurrence lists
$\Lambda_a$ for $a\in \Sigma$.

The preprocessing of the algorithm simply consists of building $\Lambda$ on the
input word~$w$, in linear time and with linear memory by \cref{prp:occlist}.
The handling of letter substitution updates simply amounts to maintaining
$\Lambda$ up to date. Formally, 
let $(i,a)$ be a letter substitution update, and let
$b \in \Sigma$ be the letter $w[i]$ before the update (which can be obtained in
constant time and linear memory up to storing the current $w$ in a table).
We handle the update by applying it to $\Lambda$,
namely, we remove $i$ from $\Lambda_b$ and add $i$ to $\Lambda_a$, in
constant time by \cref{prp:occlist}.

\tikzset{
    structure/.style={
        rectangle,
        fill=T-Q-PH1,
        append after command={
            \pgfextra{
                \draw[T-Q-PH1,line width=1pt] ([xshift = .5cm]\tikzlastnode.north east) -- ([xshift = .5cm]\tikzlastnode.south east);
            }
        }
    }
}

\tikzset{
    counting/.style={
        rectangle,
        fill=T-Q-PH3,
        append after command={
            \pgfextra{
                \draw[T-Q-PH3,double,line width=1pt] ([xshift = 1cm]\tikzlastnode.north east) -- ([xshift = 1cm]\tikzlastnode.south east);
            }
        }
    }
}

\begin{algorithm}
\caption{Enumerate $L$-infixes of a word.}
  \label{alg:antoineenum}
\begin{algorithmic}[1]
\Require \threshext language $L$, threshold $p \geq 2$, languages $\cond(S)$ 
   for each set of non-neutral letters $S$ (see \cref{threshdef-condp}), word $w$, occurrence lists $\Lambda_a$ for $a \in
  \Sigma$, 
\Ensure enumeration of the $L$-infixes of $w$

\BeginBox[counting]\State $\text{nocc}, \text{nocc}' \gets$ empty arrays \tabto{\linewidth}\null\EndBox
\ForAll{$a \in \Sigma$}
  \BeginBox[counting]\State $\text{nocc}[a] \gets |\Lambda_a|$
  \Comment{number of occurrences from occurrence lists}
  \label{lin:noccinit}\EndBox
  \State \Call{StartMinBackgroundTraversal}{$a$, $\Lambda_a$}
  \label{lin:minback}
\EndFor

\State $\mathfrak{L}_{-1}, \mathfrak{R}_{-1} \gets$ Unset

\BeginBox[structure]\For{$l \gets 1$ to $|w|$}
  \State $r \gets |w|$ \tabto{\linewidth}\null\label{lin:initr}\EndBox
  \BeginBox[counting]\State $\text{nocc}' \gets$ copy of $\text{nocc}$ \label{lin:nocccopy} \tabto{\linewidth}\null\EndBox
  \State $\mathfrak{R} \gets$ \Call{InitRI}{\,} \Comment{initialize current RI,
  see \cref{prp:ri}}
  \label{lin:initrs}
  \LComment{initialize $\Delta$ for all rare letters}
  \State $\Delta \gets$ \Call{GetRareOccurrences}{$l$, $r$, $\mathfrak{L}_{-1}$, $\mathfrak{R}_{-1}$, $\text{nocc}'$} \label{lin:getrareocc} %
  \BeginBox[counting]\State $T \gets \{a \in \Sigma \mid a \text{~non-neutral and~} \text{nocc}'[a] \geq p\}$\tabto{\linewidth}\null\EndBox
  \BeginBox[structure]\While{$T \neq \emptyset$ \textbf{and} word stored in
  $\Delta$ is in $\cond(T)$}
  \Comment{relies on \cref{lem:condefficient}}
  \label{lin:whilecond}
    \State \Call{Output}{$[l, r]$}\Comment{as $w[l,r]\in L$} \label{lin:output}\EndBox
    \BeginBox[counting]\State $a \gets w[r]$\Comment{this occurrence of $a$ will exit the right endpoint}
    \State $\text{nocc}'[a] \gets \text{nocc}'[a] - 1$
  \label{lin:noccrdec}
    \State $ T \gets \{a \in \Sigma\mid a \text{~non neutral and~} \text{nocc}'[a] \geq p\}$\EndBox
    \If{$\text{nocc}'[a] == p-1$ and $a$ non-neutral}
      \LComment{letter $a$ just became rare; populate $\Delta$ for $a$ (see
      \cref{alg:getRareOccurrences})} %
      \State $\Delta[a] \gets$ \Call{GetRareOccurrencesSingle}{$a$, $l$, $r-1$, $\mathfrak{L}_{-1}$, $\mathfrak{R}_{-1}$}
  \label{lin:getrareoccs}
    \EndIf
    \If{$\text{nocc}'[a] < p-1$ and $a$ non-neutral}
      \State $\Delta[a] \gets \Delta[a] \setminus \{r\}$
      \Comment{remove the occurrence of $a$ from occurrence list $\Delta[a]$}
      \label{lin:updatelists}
    \EndIf
    \State $\mathfrak{R} \gets$ \Call{UpdateRI}{$\mathfrak{R}$, $r$}
    \Comment{see \cref{prp:ri}}
    \label{lin:updaters}
\BeginBox[structure] \State $r \gets r - 1$ \label{lin:decrement}
    \Comment{move right endpoint $r$ one step left}
\EndBox
  \EndWhile
\BeginBox[structure]
  \If{$T \neq \emptyset$}
  \Comment{\nocond: by \cref{lem:condmonotone} no more $L$-infixes for left
  endpoint~$l$}
  \label{lin:nocond}
  \If{$r == |w|$}
    \LComment{while-loop exited immediately, so we must finish directly}
    \State \Return \Comment{exit as there are no more $L$-infixes at all by \cref{lem:condmonotone}}
    \label{lin:nocond2}
  \EndIf
  \Else
    \Comment{\allrare: there may be more infixes to produce with left
    endpoint~$l$}
    \If{$r == |w|$} \label{lin:allrare}
  \label{lin:finish}
      \LComment{while-loop exited immediately, so we must finish directly}
      \label{lin:allrare2}
      \State \Call{ProduceRemaining}{$l, |w|, \Delta$}
      \Comment{see \cref{lem:easycase} in Appendix~\ref{section-easy-enum}}
      \State \Return \Comment{all remaining infixes produced}
    \Else
      \LComment{there may be more infixes $[l,r']$ to produce with $r'\leq r$}
      \label{lin:allrare1}
      \State \Call{ProduceRemainingL}{$l,r,\Delta$}
      \Comment{see \cref{lem:easycase} in Appendix~\ref{section-easy-enum}}
    \EndIf
  \EndIf
  \label{lin:finishend}
  \EndBox
  \If{$l == 1$}
    \State $r_1 \gets r+1$ \Comment{store right limit $r_1$ for $l=1$}
    \label{lin:r1}
    \ForAll{$a \in \Sigma$}
      \State \Call{StartMaxLeftBackgroundTraversal}{$a$, $\Lambda_a$, $r_1$}
\label{lin:maxback}
    \EndFor
    \State $\mathfrak{L} \gets$ \Call{InitLI}{$r_1$} \Comment{initialize LI}
    \label{lin:initls}
  \Else
    \State $\mathfrak{L} \gets$ \Call{UpdateLI}{$\mathfrak{L}_{-1}$, $\mathfrak{R}_{-1}$}
    \Comment{compute LI at current $r_l$, see
    \cref{clm:maintainls}}
    \label{lin:updatels}
  \EndIf

  \State $\mathfrak{R}_{-1} \gets \mathfrak{R}$; $\mathfrak{L}_{-1} \gets
  \mathfrak{L}$ \Comment{current LI/RI become previous LI/RI}
  \BeginBox[counting]\State $\text{nocc}[w[l]] \gets \text{nocc}[w[l]] - 1$ \Comment{left endpoint $l$ will move
  right and one letter exits} \label{lin:noccldec}\EndBox
\EndFor
\end{algorithmic}
\end{algorithm}

\subsection{Main Enumeration Algorithm}
\label{sec:algo}

We now present how our algorithm handles $L$-infix
enumeration queries. The overall enumeration pseudocode is given in
\cref{alg:antoineenum} (with some helper functions explained
later).

\subparagraph*{High-level algorithm (blue).}  The core of the enumeration
algorithm 
is highlighted in blue (also with a single blue line
in the right margin) in \cref{alg:antoineenum}.  We consider all possible
infixes $[l,r]$ of $w$: by increasing order for $l$, from $1$ to $|w|$ (outer
for-loop); and, for each $l$, by decreasing order for $r$, from $|w|$ to $l$
(inner while-loop). When doing so, 
we maintain the list $T$ of all frequent
letters within $w[l,r]$ and an occurrence list $\Delta$ of all rare letters
within $w[l,r]$ (this is the main technical difficulty that will be discussed
below).  The while-loop runs until $T$ is empty or
$w[l,r]_{-T} \notin \cond(T)$ (the condition on line~\ref{lin:whilecond}, that can
be tested in constant time by \cref{lem:condefficient}). Until then, by
\cref{threshdef-condp}, 
$[l,r]$ is an $L$-infix and we output it.

When exiting the while-loop, two cases may arise.
The first case, called \nocond, is when $T \neq \emptyset$ but
$w[l,r]_{-T} \notin \cond(T)$. At this point, by~\cref{lem:condmonotone}, we
know that no smaller $r$ can yield an $L$-infix for this value of~$l$, and we
can safely move on to $l+1$.
(In the case where $r=|w|$, by \cref{lem:condmonotone} we can stop the whole
process, cf line~\ref{lin:nocond2}.)

The second case, called \allrare, is when $T= \emptyset$
(line~\ref{lin:allrare}), i.e., all non-neutral letters are now rare in
$w[l,r]$. The key point is that in this case $\Delta$ contains all the
non-neutral letters of $w[l,r]$ so enumerating the desired infixes is rather
simple. %
For technical
reason we consider two subcases.  When $r == |w|$
(cf. line~\ref{lin:allrare2}), we can use $\Delta$ to complete the whole
enumeration.
When $r<|w|$,
(cf. line~\ref{lin:allrare1}), we use $\Delta$ to produce all infixes of the
form $[l,r']$ with $r'\leq r$ and we then move on to~$l+1$.

There are two remaining challenges for the algorithm.
The first one is described
in the next paragraph, and the second is sketched in the rest of this section.

\subparagraph*{Challenge 1: Maintaining occurrence counts (green).}  The first
challenge is that we must keep track, for each considered infix $[l,r]$, of the
set $T$ of frequent letters (i.e., non-neutral letters that occur $\geq p$
times). Achieving this is completely routine, and is highlighted in green (also
with a double green line in the right margin) in \cref{alg:antoineenum}.  At
the beginning, when $w[l,r]=w$, we initialize $T$ using $\Lambda$
(cf. line~\ref{lin:noccinit}) which has been computed and maintained in
\cref{sec:occlists}.  Then we create a copy for each new value of $l$ (cf. 
line~\ref{lin:nocccopy}), and decrement the count for the letter $w[r]$ when $r$
decreases (cf. line~\ref{lin:noccrdec}), and decrement the counts for the
letter $w[l]$ when $l$ increases (cf. line~\ref{lin:noccldec}).

This way, we maintain the set $T$ of the frequent letters in the
current infix; this is used in particular to know to which language $\cond(T)$
the condition of the while-loop applies.

\subparagraph*{Challenge 2: Keeping track of the rare letter occurrences.}
The second challenge, which is far harder, is that we must know the positions of the
rare letters in the current infix $w[l,r]$ (and store them in occurrence
lists $\Delta$). This is used to test membership to
$\cond(T)$, and it is used when enumerating remaining infixes
in \allrare.

The lists $\Delta$ are small, because each rare letter occurs
$< p$ times, so the memory usage of $\Delta$ is constant overall.  However,
finding the positions themselves is our main technical challenge, which is
addressed by the non-highlighted code in \cref{alg:antoineenum}.
We first explain the case of the first iteration of the for-loop ($l = 1$), and then explain the case $l > 1$.

\begin{toappendix}

\subsection{Enumerating $L$-Infixes When There Are Only Rare Letters}\label{section-easy-enum}

We take care in this appendix of the case when $w[l,r]$ contains only rare
letters (\allrare in \cref{alg:antoineenum}).
In this case we have occurrence lists $\Delta$ that store the occurrences of these rare letters in $w[l,r]$.
Then we can easily enumerate all $L$-infixes of~$w[l,r]$. Formally:

\begin{lemmarep}
  \label{lem:easycase}
  Let $L$ be a language, and let
  $S \subseteq \Sigma$ be the set of non-neutral letters.
  For any fixed $c \in \NN$, the following is true.
  Given as input a word $w$ over $\Sigma$, an infix $[l,r]$, and
  occurrence lists $\Delta = \{\Delta_a \mid a \in S\}$ for $w[l,r]$ 
  with $|\Delta_a| \leq c$ for each $a \in S$, we can do the following
  with constant delay and constant memory:
      \begin{itemize}
        \item \textsc{ProduceRemaining}($l,r,\Delta$): enumerate all $L$-infixes $[i,j]$ of~$w$
          with $l \leq i \leq j \leq r$.
        \item \textsc{ProduceRemainingL}($l,r,\Delta$): enumerate all $L$-infixes $[l,j]$ of~$w$ with $l \leq j \leq r$.
      \end{itemize}
\end{lemmarep}

\begin{proofsketch}
  We use the occurrence lists to build the subword~$w'$ of the non-neutral
  letters in~$w[l,r]$: $w'$ is of constant size. Then we consider
  each infix of~$w'$ (there are constantly many) and test their membership
  to~$L$ (in constant time as they have constant length). For each infix
  of~$w'$ in~$L$, we easily produce the matching infixes of~$w$,
  by considering all possible ranges of neutral letters that can be added
  around non-neutral letters.
\end{proofsketch}

\begin{proof}
  We show the claim for the first point; the second point follows 
  by a
  similar argument.

  Let $N$ be the set of neutral letters
  of~$\Sigma$. From the occurrence lists $\Delta$,
  we prepare the subword $w' = w_{-N}$ of the non-neutral letters of~$w[l,r]$: by
  assumption it has constant size, i.e., the length $n'$ of~$w'$ is at most
  $ck$, so this computation takes constant time and memory. Further, the
  occurrence lists also give us the increasing sequence of indices $p_1, \ldots,
  p_{n'}$ corresponding to the occurrences of the letters of $\Sigma \setminus N$
  in $w$, in-order. We set $p_0 \coloneq l$ and $p_{n'+1} \coloneq r+1$ for convenience.

  We consider every infix $[i,j]$ of $w'$ with $1 \leq i \leq j \leq |w'|$, of
  which there are constantly many. Each of them has constant length, so we
  can check in constant time whether $w'[i,j]$ belongs to~$L$ or not. If it does,
  then we easily enumerate all infixes $[l',r']$ with $p_{i-1} + 1 \leq l' \leq
  p_i$ and $p_j \leq r' \leq p_{j+1}-1$. Note that these infixes are precisely
  the infixes of~$w[l,r]$ whose non-neutral letters are the letter occurrences of
  $w'[i,j]$, so each of these is indeed a result because $w'[i,j] \in L$ and by
  definition of a letter being neutral.

  Last, if $L$ contains the empty word, for each $1 \leq i \leq n'+1$, if $p_i >
  p_{i-1} + 1$, we 
  enumerate all infixes of the form $[l',r']$ with
  $p_{i-1} + 1 \leq l' \leq r' \leq p_i-1$ where all letters are neutral.

  The correctness of the algorithm is easy to establish. In one
  direction, every enumerated infix indeed belongs to~$L$. In the other
  direction, every infix
  of~$w[l,r]$ was considered: either it does not include one of 
  the letter occurrences of~$w'$ (i.e., it consists only of neutral letters, and
  was considered in the last paragraph), or it includes a non-empty infix of~$w'$ (and
  was considered in the preceding paragraph). Further, the constant
  delay and constant memory requirements are clearly satisfied. This establishes
  correctness and concludes the proof.
\end{proof}

\end{toappendix}

\subparagraph*{Finding rare letters for \bm{$l=1$}.}
We explain how to obtain the occurrence list
$\Delta_a$ of a non-neutral letter~$a$ that becomes rare in~$w[l,r]$ in the first iteration of the for-loop, i.e.,
$l=1$.
The letter $a$ may be \emph{initially rare}, i.e., already when $r == |w|$
(line~\ref{lin:getrareocc}), or it may become rare when $r$ is decremented
(line~\ref{lin:getrareoccs}). The case of an initially rare letter will in fact
be subsumed by the case of a letter that becomes rare, so we only discuss the
latter.

When $a$ becomes rare (line~\ref{lin:getrareoccs}) for some value of~$r$, we need to find $p-1$
occurrences of~$a$ in $w[1,r]$ in order to build $\Delta_a$. This is challenging
because the algorithm has not examined $w[1,r]$ yet. We can do it by traversing
the occurrence list $\Lambda_a$ of~$a$ in~$w$ (which was computed during the
preprocessing and maintained under letter
substitution updates, see \cref{sec:occlists})
and finding the $p-1$ smallest
$a$'s of~$w$. However, as $\Lambda_a$ is unsorted and has size
$|w|_a$ (which is not constant), this would take too long. For this reason, 
we must do this traversal \emph{in the background}.

More precisely,
at line~\ref{lin:minback}
we start a background
process which traverses each occurrence list $\Lambda_a$ for each $a \in \Sigma$. This is 
called the \emph{min background traversal for~$a$}: it is done by calling a
function \textsc{StartMinBackgroundTraversal} on~$\Lambda_a$, whose pseudocode we omit.
From then on, we have a constant number $|\Sigma|$ of background tasks 
that run in parallel with the rest of the code.
Namely, each min background
traversal for~$a$ retrieves one element of~$\Lambda_a$ at each instruction of
\cref{alg:antoineenum}, and memorizes the $\leq p-1$ smallest positions seen so
far:
this can be easily implemented in constant memory and constant
time per step because $p$ is constant.
Then, when $a$ becomes rare at $l=1$ (line~\ref{lin:getrareoccs}), then \textsc{GetRareOccurrencesSingle} 
 will retrieve the result of the min background traversal for~$a$. We
 illustrate in Appendix~\ref{apx:min} why the min background traversals are
 useful.

  The important point is that, whenever $a$ becomes rare at~$l=1$, then the
  min background traversal for~$a$ finishes in time. This is true because, at
  that point, 
  the right border $r$ has ``swept over'' all occurrences of~$a$ except at most
  $p-1$. %
  The formal claim is below and proved in Appendix~\ref{apx:finishmin}:

\begin{toappendix}
\subsection{Proof of Claim~\ref{clm:finishmin} On Min Background Traversals}
  \label{apx:finishmin}
\end{toappendix}

\begin{claimrep}
  \label{clm:finishmin}
  For $l=1$, for the first value of $r$ where a letter $a \in \Sigma$ becomes
  rare in $[1,r]$, then the min background traversal for~$a$ can be finished in
  constant time.
\end{claimrep}

\begin{toappendix}
  In light of the claim above, to retrieve the results of the min background traversal, we call
  the function \textsc{FinishMinBackgroundTraversal}($a$) in
  \cref{alg:getRareOccurrences}, whose implementation is omitted: this takes
  constant time, and it uses only the
  constant memory allocated in the min background traversal to remember the
  $\leq p$ first $a$'s. (Note that, deviating somewhat from what was explained
  in the main text of the paper, we use the min background traversal even in the
  case where a letter $a$ is initially rare for $l=1$; this amounts to the same as
  using the occurrence list $\Lambda_a$ directly, which is what we said in
  the main text for pedagogical reasons.)
\end{toappendix}

\begin{proof}[Proof of \cref{clm:finishmin}]
  When $a$ becomes rare (including if $a$ is initially rare), then
  the right endpoint $r$ has swept over all occurrences of~$a$ except at most
  $p$. Indeed, this is by definition of $a$ being rare in $[1,r]$.
  Now, each time $r$ was decremented, so in particular each time it has swept
  over an occurrence of~$a$, then we executed
  one step of the min background traversal for~$a$. Hence, when $a$
  becomes rare, we have processed all occurrences of $a$ in the occurrence list
  $\Lambda$ of~$a$ in~$w$ except at most $p$ occurrences. This ensures that we
  can indeed finish the min background traversal in constant time and retrieve
  the results.
\end{proof}

Thus, for $l=1$ we can obtain an occurrence list $\Delta_a$ of $a$ in $w[1,r]$ whenever $a$ becomes
rare. It is then easy to keep $\Delta_a$ up-to-date as $r$ gets decremented and
letter occurrences get removed
(line~\ref{lin:updatelists}).
This ensures that we can indeed complete the iteration $l=1$ of the for-loop: we
know when to exit the while-loop, and can produce remaining infixes in 
lines~\ref{lin:finish}--\ref{lin:finishend} using \cref{lem:easycase} %
because there are only finitely many non-neutral letters left
and we can process them in constant time.
This concludes the case $l=1$. (Note that the min background traversals are only
needed for $l=1$ and they will not be used in the sequel.)

\subparagraph*{Finding rare letters with \bm{$l>1$}.}  It remains to explain
how to locate the occurrences of rare letters for $l>1$, which is far more
challenging. To do this, we use
three ingredients: another kind of background
traversal of $\Lambda$ called \emph{max-left background traversals}, the notion of \emph{right information}, and the notion
of \emph{left information}. We now sketch these ingredients, with details
deferred to Appendix~\ref{apx:algodetails}.

First, the \emph{max-left background traversal} for each letter $a \in \Sigma$
is started at the end of the for-loop for $l=1$ (line~\ref{lin:maxback}). We
do it with a function \textsc{StartMaxLeftBackgroundTraversal} whose pseudocode
we omit. Again, these $|\Sigma|$ background processes run in parallel with
\cref{alg:antoineenum}, and they each traverse an occurrence list $\Lambda_a$
over the entire word. The definition of the max-left background traversals depends
on the value $r_1$, stored at line~\ref{lin:r1}, which is the
right position $r$ just before we exited the while-loop for~$l=1$; we call this the
\emph{limit for $l=1$}. (Note that we must have $r_1 \leq |w|$, because
if the body of the while-loop is not executed at all,
then we terminate immediately at lines~\ref{lin:nocond2} or~\ref{lin:allrare2},
so $r_1 = |w|+1$ is impossible.)
The max-left background traversal for~$a$
aims at locating the $\leq p$ last~$a$'s which are left of the limit~$r_1$.
We
define this
more precisely in Appendix~\ref{apx:algodetails}, and explain when we use the 
result of the max-left background traversals and why they finish in time. We
also illustrate in Appendix~\ref{apx:maxleft} why doing the max-left background
traversals is useful.

Second, the \emph{right information} (RI) at a position $r$ of~$w$ is a
constant-sized tuple describing some letter occurrences in~$w$ to the
\emph{right} of~$r$. We memorize this information as $r$ is decremented in
\cref{alg:antoineenum} in a data structure $\mathfrak{R}$, which is initialized
at every for-loop iteration (line~\ref{lin:initrs}) and
updated throughout the while-loop iteration (line~\ref{lin:updaters}): the
details are given in Appendix~\ref{apx:maintainliri}.
The
intuition is that, for the last right endpoint position $r_l \leq |w|$ before we
exited the while-loop for the
$l$-th iteration of the for-loop (we call $r_l$ the \emph{limit for~$l$}, generalizing $r_1$ from the previous
paragraph),
then the RI at $r_l$ contains some \emph{tracked} positions of~$w$ to the right of~$r_l$. In particular it
will include some positions that will be to the left of the position
$r_{l+1}$ where we will exit the while-loop for the next iteration of the
for-loop.
An illustration of the right information is in \cref{fig:RI} in
Appendix~\ref{apx:algodetails}.

\begin{definition}
  \label{def:ri}
  Let $w$ be a word and $r$ be a position of~$w$. The \emph{right 
  information} (RI) of $w$ at $r$ consists of the following for each letter $a
  \in \Sigma$:
  \begin{itemize}
    \item The sequence $r \leq \mu_{a,1} < \cdots < \mu_{a,n_a}$ of the $\leq p$ first
    $a$'s right of $r$
      (with $n_a \leq p$): we say that the positions $\mu_{a,1}, \ldots,
      \mu_{a,n_a}$ are
      \emph{tracked}.
    \item For each $1 \leq i \leq n_a$ and for each
      letter $b \in \Sigma$, the positions of the $\leq p$ last $b$'s left of the tracked
      position $\mu_{a,i}$ and right of~$r$.
  \end{itemize}
\end{definition}

The RI stores the first $p$ letters to the right of the current limit
(and their predecessors) because
the limit can \emph{jump} $p$ positions to the right
for one increment of the left border $l$.
The next example illustrates this.

\begin{example}
  \label{ex:limit}
  Consider the language $L_5$ on alphabet $\Sigma = \{a, b, c, e\}$
  that has $e$ as a neutral letter and
  contains the words satisfying one of the following:
  \begin{enumerate}
    \item it contains at least 5 $a$'s, at least 1 $b$ and at least 1 $c$,
      and contains a factor $b \Sigma^* c$,
    \item it contains at least 1 $a$, at least 5 $b$'s, and at least 1 $c$,
      and contains a factor $c \Sigma^* a$,
    \item it contains at least 1 $a$, at least 1 $b$, and at least 5 $c$'s,
      and contains a factor $a \Sigma^* b$,
    \item two of the letters $\{a,b,c\}$ occur at least 5 times
      and the remaining at least once.
  \end{enumerate}
  The language $L_5$ is ZG (intuitively because the order on letters only
  matters for rare letters) and it is extensible because each condition in the
  bullet points above is extensible. Hence, $L_5$ is in particular \threshext,
  and we can take threshold $p=5$.

  Consider now the following input word $w$ with neutral letters already omitted.
  \begin{alignat*}{22}
    index:  &  & 1&& 2&& 3&& 4&& 5&& 6&& 7&& 8&& 9&&10&&11&&12&&13&&14&&15&&16\\
    w & =&\quad a&&\quad b&&\quad b&&\quad b&&\quad a&&\quad a&&\quad a&&\quad
    a&&\quad c&&\quad c&&\quad c&&\quad c&&\quad b&&\quad c &&\quad a &&\quad a
  \end{alignat*}

  Note that for words where the letter $a$ is frequent ($|w|_a\geq 5$),
  the language $\cond(\{a\})$ contains all words
  that contain at least one $b$ and one $c$ and where the $b$ occurs before the $c$.
  This condition $\cond(\{a\})$ is valid for $w[1,9]$,
  but no longer valid for $w[1,8]$ as there are no $c$.
  Thus, the first limit will be at position $r_1=9$.

  For the left border $l=2$, when $r\leq 14$, the word now only contains 4 $a$'s.
  The smallest factor $w[2,r]$ that contains at least a frequent letter is $w[2,14]$.
  There, the letter $c$ is frequent
  and $\cond(\{c\})$ contains all words with at least one $a$ and one $b$
  and where the $a$ occurs before the $b$.
  Thus, $\cond(\{c\})$ is fulfilled and the second limit is $r_2=14$.
  Note that position $r_2$ is the fifth $c$ of the $\leq 5$ first $c$'s right of
  $r_1$ (and it is not $|w|+1$).
  So this illustrates why the RI has to store so many positions.
\end{example}

We now explain the third and final ingredient.
For $r\geq r_1$ with $r_1$ the limit for~$l=1$,
the \emph{left information} (LI) at a position $r$ of~$w$
is a
constant-sized tuple of letter occurrences in~$w$ to the
\emph{left} of~$r$, but to the right of~$r_1$.
(Intuitively, the positions to the left of~$r_1$ are handled by the
max-left background traversals.)

\begin{definition}
  \label{def:li}
  Let $r_1$ be the position where the while-loop exits for~$l=1$.
  The \emph{left information} (LI) of $w$ at the limit $r_l$ (for $l\geq 1$),
  contains,
  for each letter $a \in \Sigma$, the $\leq p$ last $a$'s left of $r_l$ and
  right of $r_1$.
\end{definition}

The LI will only be used for values of~$r$ that are limits,
i.e., where we exit the while-loop. For $l=1$, we initialize the LI at $r_1$
at the end of the outer for-loop ($l=1$) at line~\ref{lin:initls}: this is easy because the LI for $l=1$ concerns positions of~$w$ to the left of~$r_1$ and to the right of~$r_1$ (i.e., only in~$r_1$). Then, we update
the LI at~$r_l$ at the end of the $l$-th
for-loop iteration (at line~\ref{lin:updatels}) from the LI at
$r_{l-1}$ and the RI at~$r_{l-1}$: the details are again in
Appendix~\ref{apx:maintainliri}.

All that remains is to show how rare letter occurrences can be
properly computed whenever a letter becomes rare for $l>1$. This is the object
of the claim below (see Appendix~\ref{apx:algodetails}):

\begin{claim}
  \label{clm:algofinish}
  By maintaining the RI at the successive values of~$r$, the RI at the limit
  $r_{l-1}$ where we exited the for-loop at $l-1$, and the LI at~$r_{l-1}$,
  together with the results of the max-left background traversals (when
  available), we can determine in constant time,
  for each $l>1$ and letter $a\in\Sigma$,
  an occurrence list $\Delta_a$ of the occurrences of~$a$ in $w[l,r]$ whenever
  $a$ becomes rare (i.e., either initially at line~\ref{lin:getrareocc}, or
  in the while-loop at line~\ref{lin:getrareoccs}).
  This can be done while obeying the constant delay and
  constant enumeration memory requirements.
\end{claim}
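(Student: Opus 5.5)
Fix $l>1$ and a non-neutral letter $a$ that becomes rare in $w[l,r]$ for the current $r$. The plan is to show that the $\leq p-1$ occurrences of $a$ in $w[l,r]$ can be read off from three constant-size sources, and then to bound the time needed to merge them. The first step is a monotonicity fact about limits: $r_{l-1} \leq r_l$ for every $l$. Indeed, if $[l,r]$ passes the while-condition, then so does $[l-1,r]$. To see this, use \cref{lem:condmonotone} in contrapositive form together with the fact that adding a letter on the left only makes letters more frequent. Then argue via \cref{threshdef-condp}, or more directly via \cref{threshdef-mon}, which is monotone under extension once $T$ is non-empty. This forces a word $w[l,r]$ that still has a frequent letter to be treated consistently with $w[l-1,r]$. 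I expect this case analysis to need some care when $T$ shrinks, but the inequality $r_{l-1}\le r_l$ is what lets us split the infix $[l,r]$ into three zones: $[l, r_1-1]$, $[r_1, r_{l-1}]$, and $(r_{l-1}, r]$.

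The second step treats each zone separately. For $(r_{l-1},r]$, the current RI $\mathfrak{R}$ (maintained while $r$ decreases from $|w|$) gives the first $\leq p$ occurrences of $a$ right of $r$, not left of it, so this zone is handled differently. We use the RI at $r_{l-1}$ stored in $\mathfrak{R}_{-1}$: it lists the first $\leq p$ occurrences of $a$ to the right of $r_{l-1}$. Since $a$ is rare in $w[l,r]$, at most $p-1$ occurrences of $a$ lie in $(r_{l-1},r]$, and they are exactly the tracked $\mu_{a,i}\le r$. For $[r_1,r_{l-1}]$, we use the LI at $r_{l-1}$ stored in $\mathfrak{L}_{-1}$, which gives the last $\leq p$ $a$'s there. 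For $[l,r_1-1]$, we use the max-left background traversal, which gives the last $\leq p$ $a$'s left of $r_1$, filtered by position $\geq l$. Merging and keeping only positions in $[l,r]$ is correct because rarity ensures the true set has size $<p$. Each source stores at least $p$ candidates whenever that many exist, so the true set is never truncated; this is the reason thresholds are $\leq p$ rather than $\leq p-1$. The lists $\Delta_b$ for initially rare letters (line~\ref{lin:getrareocc}) are obtained identically with $r=|w|$.

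The main obstacle is ensuring that the max-left background traversal for $a$ has finished when it is needed. I would argue it by amortization, mirroring \cref{clm:finishmin}. Every step of the algorithm advances each traversal by one element. If $a$ is rare in $w[l,r]$ with $r \ge r_1$, then all but $<p$ occurrences of $a$ in $[1,r_1)$ lie in $[1,l)$. Each of these was swept past by the left endpoint, and each increment of $l$ costs at least one step. So at least $|w|_a$ minus a constant steps have elapsed since line~\ref{lin:maxback}, counting also the occurrences right of $r_1$, which the right sweeps for $l=1$... For these I would charge instead to the current while-loop iterations, which sweep from $|w|$ to $r$. The traversal therefore completes with constant extra work. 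If instead $a$ is rare while $r<r_1$, there was no such earlier sweep, which would contradict $r_{l}\ge r_{1}$. The remaining work is constant-time updating of LI from $(\mathfrak{L}_{-1},\mathfrak{R}_{-1})$, which I expect to be routine merging of constant-size tuples, and checking that no step writes to the preprocessed structure $\Lambda$, so that the constant-memory requirement holds.
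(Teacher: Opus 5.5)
Your three-zone reconstruction of $\Delta_a$ is exactly the paper's route: the max-left traversal covers positions up to $r_1$, $\mathfrak L_{-1}$ covers $[r_1,r_{l-1}]$, and the tracked positions of $\mathfrak R_{-1}$ cover what lies beyond $r_{l-1}$. The same holds for the inequality $r_{l-1}\le r_l$ and for the amortized argument that the max-left traversal finishes in time (\cref{clm:rareincomplete} and the first half of \cref{lem:boundedjump}).

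The genuine gap is the step you set aside as ``routine merging of constant-size tuples'': computing the LI at $r_l$ from $\mathfrak L_{-1}$ and $\mathfrak R_{-1}$ in constant time. Without it, the LI at $r_{l-1}$ that your middle zone relies on is not available from the third iteration on. The stretch $[r_{l-1},r_l]$ is never swept during iteration $l$, since the while-loop only visits $[r_l,|w|]$. $\mathfrak L_{-1}$ stops at $r_{l-1}$, and the current $\mathfrak R$ only describes positions right of $r_l$. $\mathfrak R_{-1}$ describes this stretch only up to its tracked positions. Nothing in your argument (monotonicity included) prevents $r_l$ from lying beyond all of them, with unboundedly many occurrences of every letter in between. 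In that case the last $\le p$ $b$'s of $[r_{l-1},r_l]$ are recorded nowhere in constant space, and no merge can recover them.

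The missing idea is the second half of \cref{lem:boundedjump}: $r_l$ is itself one of the $\le p$ first occurrences, right of $r_{l-1}$, of some non-neutral letter, hence a tracked position of $\mathfrak R_{-1}$. Here is a quick argument. The while-condition holds at $r_l$ and fails at $r_l-1$, so $c\coloneq w[r_l]$ must be non-neutral and rare in $w[l,r_l-1]$. Otherwise $c$ is neutral or frequent in $w[l,r_l-1]$, and then $T$ is unchanged and the $T$-projections agree up to neutral letters, which are also neutral for $\cond(T)$. Since $l\le r_{l-1}\le r_l$ (the first inequality because $p\ge 2$), we get $|w[r_{l-1},r_l-1]|_c<p$ while $w[r_l]=c$.

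Then the \emph{second} component of the RI in \cref{def:ri}, which your proposal never uses, gives for each $b$ the $\le p$ last $b$'s left of $r_l$ and right of $r_{l-1}$. \Cref{clm:leftright} merges these with $\mathfrak L_{-1}$ into the LI at $r_l$; this is \cref{clm:maintainls}. That component is in the definition precisely for this step.
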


This concludes the presentation of \cref{alg:antoineenum},
and finishes the proof of \cref{thm:main}.

\begin{toappendix}

  \subsection{Additional Details and Proof of Claim~\ref{clm:algofinish}}
  \label{apx:algodetails}

  This appendix gives more details on the main algorithm towards proving \cref{clm:algofinish}, which is all that is needed to
  establish \cref{thm:main}. It is
  subdivided in four sections. First, we give a formal
  explanation of the terminological choices used to refer to letter
  occurrences. %
  Second, 
  we formally define the notion of \emph{limit} which was used informally in the main text,
and characterize
how the limit can move from one value of $l$ to the next.
Third, we explain more precisely how the right information (RI) and left information (LI)
can be computed and
maintained efficiently.
  Then, we 
  prove \cref{clm:algofinish} by explaining how the rare
  letter occurrences can be obtained from the max-left background traversals and
  the RI and LI.

\subsubsection{Terminological Choices}
In the presentation of the algorithm, and in some definitions 
(e.g., \cref{def:ri} and \cref{def:li}), we often talk about the
occurrences of non-neutral letters that are to the left or right of a specific
word position. So we now formally define the terminology that we use.

Let $1 \leq i \leq j \leq |w|$ be positions of the word $w$.
By the \emph{$\leq p$ last $a$'s  left of~$j$ and right of~$i$}, we denote
the (possibly empty) ordered sequence of positions $(l_q, \ldots, l_1)$ with
$q\leq p$ maximal and with $i\leq l_q < \cdots < l_1 \leq j$, each $l_i$ being
  maximal given the previous ones, such that (1.) all positions contain
$a$'s and (2.) the positions correctly reflect the last $\leq p$ $a$'s of
$w[i,j]$ and reflect all $a$'s if there are less than~$p$. 
Formally:
\begin{enumerate}
  \item For all $1 \leq k \leq q$, we have $w[l_k] = a$
  \item Letting $i' \coloneq i$ if $q < p$ and $i' \coloneq l_q$
    if $q = p$, then the
    following holds:
for all $i' \leq x \leq j$, if $x \notin \{l_q, \ldots, l_1\}$ then
        $w[x] \neq a$
    \end{enumerate}
Note that $q = \min(|w[i,j]|_a, p)$.
In particular, if there are no $a$'s in $w[i,j]$, then $q=0$
and the sequence is empty.

When $i=1$ we omit
$i$ from the definition and simply talk of the \emph{$\leq p$ last $a$'s left of~$j$}. 
When $j=|w|$ we omit $j$ from the definition and simply talk of the \emph{$\leq
p$ last $a$'s right of~$i$}.
Further when additionally
$j=|w|$ we omit $j$ and talk of the \emph{$\leq p$ last $a$'s}.

Symmetrically, we
define the \emph{$\leq p$ first $a$'s right of~$i$ and left of~$j$} 
in the expected symmetric way: they form a sequence $i \leq l_1 < \cdots <
l_q \leq j$ with $q \leq p$ as large as possible and every $l_i$ as small as
possible. Formally:
\begin{enumerate}
  \item For all $1 \leq k \leq q$, we have $w[l_k] = a$
  \item Letting $j' \coloneq j$ if $q < p$ and $j' \coloneq l_q$
    if $q = p$, then the
    following holds:
for all $i \leq x \leq j'$, if $x \notin \{l_1, \ldots, l_q\}$ then
        $w[x] \neq a$
    \end{enumerate}
    Likewise, we define 
the \emph{$\leq p$ first $a$'s right of~$i$}
when $j = |w|$ and the \emph{$\leq p$ first $a$'s} when additionally $i=1$.

  \subsubsection{Limits}
Now, let us give the formal definitions for the notions of \emph{limits}. Recall our global
description of the algorithm. For each value of $l$ from $1$ to the last value
considered, the algorithm will consider
all values of $r$ by decreasing order starting from $|w|$ until the while-loop
is exited, i.e., either the set
$T$ of frequent non-neutral letters of $w[l,r]$ becomes empty (\allrare)
or $w[l,r]$ is no longer in~$\cond(T)$ (\nocond).
The value $r+1$, corresponding to the right endpoint of the last infix outputted
in line~\ref{lin:output},
will be called the \emph{limit} for~$l$ and is written
$r_l$.
In the case where the while-loop terminates immediately (and its body did not
execute for that value of~$l$ and did not output any infix), the limit is $r_l
\coloneq |w|+1$: however this case never matters because we then
terminate immediately at lines~\ref{lin:nocond2} or~\ref{lin:allrare2}.

Remember that,
by \cref{threshdef-condp},
membership to
$\cond(T)$ for the word $w[l,r_l]$ only depends on
$w[l,r_l]_{-(N\cup T)}$ (with $N$ the neutral letters for~$L$ in~$\Sigma$),
i.e., it can be checked by considering only the occurrences of rare letters
given by~$\Lambda$. Then
the following is true by definition:
\begin{claim}
  \label{clm:nostop}
  For each $l \geq 1$ where $r_l < |w|+1$,
  the set $T$ of frequent letters of $w[l,r_l]$ is nonempty, and
  we have $w[l,r_l] \in \cond(T)$.
  In particular $w[l,r_l] \in L$.
\end{claim}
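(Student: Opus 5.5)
The claim is essentially an unpacking of the definition of the limit $r_l$ together with the loop invariant of the while-loop of \cref{alg:antoineenum}. By definition, $r_l = r+1$ where $r$ is the value of the right endpoint when we exit the while-loop. Equivalently, $r_l$ is the right endpoint of the last infix output at line~\ref{lin:output} during iteration~$l$. The hypothesis $r_l < |w|+1$ excludes the degenerate case where the body never executed. So the infix $[l,r_l]$ was in fact produced, i.e., the loop guard was true at that moment.

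The steps are as follows.

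\textbf{Guard at $r_l$.} When $r$ had value $r_l$, the guard on line~\ref{lin:whilecond} held. I will use the correctness of the green bookkeeping: the set $T$ computed from $\text{nocc}'$ is exactly the set of frequent non-neutral letters of $w[l,r_l]$. This holds because $\text{nocc}$ is initialized from the occurrence lists, decremented for $w[l]$ when $l$ increases, copied into $\text{nocc}'$, and decremented for $w[r]$ when $r$ decreases. The guard therefore gives $T \neq \emptyset$.

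\textbf{Membership in $\cond(T)$.} The guard also says that the word stored in $\Delta$ lies in $\cond(T)$. I will rely on the invariant that $\Delta$ holds exactly the occurrences of the rare letters of $w[l,r_l]$, which is what \cref{clm:algofinish} (and \cref{clm:finishmin} for $l=1$) is meant to guarantee. Under that invariant, the word read off $\Delta$ is $w[l,r_l]_{-(N\cup T)}$, where $N$ is the set of neutral letters. Since the letters of $N$ are neutral for $L$, and the letters of $T$ are neutral for $\cond(T)$ (as observed after \cref{threshdef-condp}), membership of this word in $\cond(T)$ is equivalent to $w[l,r_l]_{-T} \in \cond(T)$, and also to $w[l,r_l] \in \cond(T)$.

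\textbf{Conclusion.} Applying \cref{threshdef-condp} to $u = w[l,r_l]$, whose frequent set $T$ is nonempty, gives $w[l,r_l] \in L$ from $u_{-T} \in \cond(T)$.

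The only delicate point is that this claim is stated "by definition", yet it depends on the invariants about $\text{nocc}'$ and $\Delta$. The counting invariant is routine. The $\Delta$ invariant is what the subsequent appendix establishes, so the argument here is formally conditional on it; I would phrase the proof as relying on those invariants rather than circularly invoking \cref{clm:algofinish}.
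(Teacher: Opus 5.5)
Your proposal is correct and follows the paper, which states this claim as holding ``by definition'': $r_l$ is the right endpoint of the last infix output, so the while-loop guard held there; membership to $\cond(T)$ depends only on $w[l,r_l]_{-(N\cup T)}$, i.e., on the rare-letter occurrences held in $\Delta$; and \cref{threshdef-condp} then gives $w[l,r_l]\in L$. Your remark that this tacitly relies on the $\text{nocc}'$ and $\Delta$ bookkeeping being correct is accurate and harmless: that correctness can be established by induction on $l$, so no circularity with \cref{clm:algofinish} arises.
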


It will be important in the algorithm to understand how the limit changes
between one left endpoint position $l$ and the
next left endpoint position $l+1$ (i.e., the next iteration of the for-loop):

\begin{lemmarep}
  \label{lem:boundedjump}
  For any $1 \leq l < |w|$ such that the algorithm runs the $(l+1)$-th iteration
  of the for-loop (i.e., has not terminated yet), we have that the limit
  $r_{l+1}$ satisfies one of the following conditions:
  \begin{itemize}
    \item it is equal to $r_l$;
    \item it is equal to $|w|+1$ (i.e., we terminate immediately for the
      $(l+1)$-th iteration);
    \item it is one of the $\leq p$ first $a$'s right of~$r_l$ for some
  non-neutral letter $a$.
   \end{itemize}
\end{lemmarep}

It it rather intuitive why the limit $r_{l+1}$ can be part of the $\leq 2$ first
$a$'s right of~$r_l$, e.g., for $a \coloneq w[l]$ the letter that ``exits'' at $l$. (Remember
that the $\leq 2$ first $a$'s right of~$r_l$ may include $r_l$ if we also have
$w[r_l] == a$, which is why we say ``$\leq 2$'' and not ``$\leq 1$''.)
But it may not be intuitive why we need to
consider the $\leq p$ first $a$'s. Refer back to \cref{ex:limit} which
illustrates the possible cases of
\cref{lem:boundedjump}.

To show \cref{lem:boundedjump}, we first make an observation (similar to \cref{lem:condmonotone})
on the monotonicity of the Cond
predicates from \cref{threshdef-condp}, where we refer to the notion of rare and
frequent letters that depends on the language $L$ and threshold $p$:

\begin{claim}
  \label{clm:condmon}
  Let $u \in \Sigma^*$ be a word and let $T$ be its set of frequent letters. Let
  $u'$ be an extension of $u$ and let $T' \supseteq T$ be its set of frequent letters. If $u
  \in \cond(T)$, then $u'\in \cond(T')$.
\end{claim}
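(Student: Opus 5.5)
The claim follows directly from the definition of $\cond$ in \cref{threshdef-condp}. The plan is to take the witness $v \in L$ for $u \in \cond(T)$ and reuse it as a witness for $u' \in \cond(T')$. The only ingredient is a small fact about erasing letters: for $S \subseteq S'$, if $x$ is a factor of $y$, then $x_{-S'}$ is a factor of $y_{-S'}$, and moreover $(x_{-S})_{-S'} = x_{-S'}$.

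First I check that $\cond(T')$ makes sense. The statement $u \in \cond(T)$ presupposes that $T$ is a non-empty set of non-neutral letters. Since $T \subseteq T'$, the set $T'$ is also non-empty. (The inclusion $T \subseteq T'$ holds automatically anyway: $u$ is a factor of $u'$, so $|u'|_a \geq |u|_a$ for every letter $a$.)

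Next I prove the erasure fact. Write $y = sxt$. Erasing a set of letters is a monoid morphism $\Sigma^* \to (\Sigma\setminus S')^*$, so $y_{-S'} = s_{-S'}\, x_{-S'}\, t_{-S'}$, which shows that $x_{-S'}$ is a factor of $y_{-S'}$. The identity $(x_{-S})_{-S'} = x_{-S'}$ holds because $S \subseteq S'$: erasing $S$ and then $S'$ erases exactly the letters of $S'$.

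Now I combine the pieces. Since $u \in \cond(T)$, there is some $v \in L$ such that $v_{-T}$ is a factor of $u_{-T}$. Applying the fact with $S = T$ and $S' = T'$ gives that $v_{-T'} = (v_{-T})_{-T'}$ is a factor of $(u_{-T})_{-T'} = u_{-T'}$. Applying the fact again, this time to $u$ being a factor of $u'$, gives that $u_{-T'}$ is a factor of $u'_{-T'}$. Since the factor relation is transitive, $v_{-T'}$ is a factor of $u'_{-T'}$ with $v \in L$. Hence $u' \in \cond(T')$.

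I do not expect any real obstacle. The only points needing care are the non-emptiness of $T'$, so that $\cond(T')$ is defined, and the observation that letter erasure is a morphism, which is what makes it preserve the factor relation.
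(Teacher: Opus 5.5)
Your proposal is correct and follows the paper's proof: you reuse the witness $v \in L$ for $u \in \cond(T)$, note that erasing the larger set $T'$ preserves the factor relations $v_{-T} \preceq u_{-T}$ and $u \preceq u'$, and conclude by transitivity. Your version makes explicit two details the paper leaves implicit: that $T'$ is non-empty, so $\cond(T')$ is defined, and that letter erasure is a morphism.
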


\begin{proof}
  As $u \in \cond(T)$, there exists a word $v \in L$ ensuring that $v_{-T}$ is a
  factor of $u_{-T}$. Now, as $u'$ is an extension of~$u$, we have $T' \supseteq
  T$. Thus, we also have that $v_{-T'}$ is a factor of $u_{-T'}$. As $u$ is a
  factor of~$u'$ we have that $v_{-T'}$ is a factor of $u'_{-T'}$. Thus by
  definition $u' \in \cond(T')$.
\end{proof}

We can now prove \cref{lem:boundedjump}:

\begin{proof}[Proof of \cref{lem:boundedjump}]
  If $r_{l+1} = |w|+1$ then we have nothing to show, so in the rest of the proof we
  assume that $r_{l+1} \leq |w|$.

  Let us first show that $r_{l+1}\geq r_l$. Assume by contradiction that
  instead $r_{l+1}< r_l$.
  Consider the last infix $w_{l+1} \coloneq w[l+1,r_{l+1}]$ outputted
  before exiting the while-loop at the $(l+1)$-th iteration of the for-loop. 
  It must be the case by \cref{clm:nostop} that the set of frequent letters of
  $w_{l+1}$ is non-empty and $w_{l+1}$ is
  in $\cond(T)$ for the set $T$ of its frequent letters.
  Consider now the factor $w_l'\coloneq w[l,r_l-1]$, with $r_l-1 \geq r_{l+1}$: clearly $w_l'$ is an
  extension of~$w_{l+1}$.
  Now, by \cref{clm:condmon}, we have that $w_l'$ is in $\cond(T')$ for the
  non-empty set $T' \supseteq T$ of its frequent letters. So at the $l$-th
  iteration of the for-loop we should not have exited the while-loop at $r_l$,
  because $r_{l-1}$ still satisfied the conditions. This is a contradiction to
  the definition of~$r_l$.

  Let us now show that $r_{l+1}$ falls in one of the cases from the lemma
  statement. Consider the letter $a$ occurring at position $l$. For any $l+1 \leq r
  \leq |w|$, we have that $w[l+1,r]$ is the same as $w[l,r]$ except it is
  missing an $a$ at the beginning. There are two cases depending on why 
  we left the inner while-loop for $l+1$. First, if it is
  because of \allrare, then either the last letter that becomes rare is
  different from~$a$ and we have $r_{l+1} = r_l$, or that letter is $a$. In that
  case, as $p\geq 2$, we have that $r_{l+1}$ is in the $\leq 2$ first $a$'s
  right of~$r_l$. (In particular if there are
  no $a$'s at a position strictly greater than~$r_l$ then $r_{l+1}$ is~$|w|+1$,
  but we already excluded that case.)

  Second, if it is because of
  $\cond(T')$ for $T'$ the set of letters that are frequent in
  $w[l+1,r_{l+1}]$,
  we reason as follows. 
  Remembering that we assumed $r_{l+1} \leq |w|$, the
  position $r_{l+1}$ must ensure that either the set
  $T$ of letters frequent in $w[l+1,r_{l+1}-1]$ is different from the set $T'$ of
  letters frequent in $w[l+1,r_{l+1}]$; or the two sets are the same but
  $w[l+1,r_{l+1}]_{-T}$ is in $\cond(T)$ but $w[l+1,r_{l+1}-1]_{-T}$ is not.

  In the first subcase, it must be that $T \subseteq T'$ and $T' \setminus T$ is
  one single letter, namely, the letter $a \coloneq w[r_{l+1}]$, which is
  non-neutral.
  The letter $a$ became rare, so we have (*):
  $|w[l+1,r_{l+1}-1]|_a < p$.
  Now, at least a letter must be frequent in $w[l,r_l]$ by \cref{clm:condmon}
  and we chose $p \geq 2$, so we know that $r_l > l$, thus $l+1 \leq r_l$. From
  (*) we deduce that 
  $|w[r_l,r_{l+1}-1]|_a < p$.
  Now, observe that $w[r_{l+1}]$ contains $a$ by definition. So indeed $r_{l+1}$
  is part of the $\leq p$ first $a$'s right of~$r_l$.

  In the second subcase, the words $w[l+1,r_{l+1}]_{-T}$ and
  $w[l+1,r_{l+1}-1]_{-T}$ must differ by a letter which is non-neutral and not
  in~$T$, so the letter $a \coloneq w[r_{l+1}]$ must be a letter which is rare
  in $w[l+1,r_{l+1}]$ and in $w[l+1,r_{l+1}-1]$, i.e., $|w[l+1,r_{l+1}]|_a < p$.
  We then conclude exactly like in the first subcase.
\end{proof}

  \subsubsection{Computing and Maintaining the LI and RI}
  \label{apx:maintainliri}
  In this appendix, we give the precise explanation of how the left information
  (LI) and right
  information (RI) can be computed on the input word and how it can be maintained as
  the left and right endpoints change. We then explain in the next section
  how this information,
  together with the occurrence counts described earlier, and with the max-left
  background traversals, can be used to find occurrences of rare letters.

Recall that the RI is initialized at every for-loop
iteration (line~\ref{lin:initrs}; also including the first iteration $l=1$ of
the for-loop) with $r=|w|$,
and updated throughout the while-loop iteration by decrementing~$r$
(line~\ref{lin:updaters}).
The LI is initialized at the end of the
outer for-loop ($l=1$) at line~\ref{lin:initls} at the right limit~$r_1$,
and updated at the end of each
for-loop iteration $l+1$ (at line~\ref{lin:updatels}) from the LI
at the previous limit $r_l$ (computed at the $l$-th for-loop iteration) and from the 
RI at the previous limit $r_l$ (computed at the end of the $l$-th
for-loop iteration).

\begin{figure}
  \begin{tikzpicture}
    \node at (-0.3,0) (w) {$\boldsymbol w$};
    \draw[thick] (0,0) -- (12,0);
    \draw[thick] (0,.5) -- (0,-.5);
    \node at (0,1) (1) {$1$};
    \draw[thick] (12,.5) -- (12,-.5);
    \node at (12,1) (lengthw) {$|w|$};
    \draw (1,.2) -- (1,-.2);
    \node at (1,-.5) (r) {$\boldsymbol r$};
    \foreach \x in {1,2,3.5} {
      \tikzmath{\a = .5+2.5*\x;}
      \draw (\a,.2) -- (\a,-.2);
      \node at (\a,-.5) (r) {\pgfmathparse{ifthenelse(\x<3,"$\mu_{a,\x}$","$\mu_{a,n_a}$")} \pgfmathresult};
      \node at (\a,.5) (a) {$a$};
      \draw [decorate,decoration={brace,amplitude=5pt,mirror}]
      (\a-1.2,-.8) -- (\a-.1,-.8) node[midway,below=2mm,align=center]{$\leq p$ last $b$'s\\ left of \pgfmathparse{ifthenelse(\x<3,"$\mu_{a,\x}$","$\mu_{a,n_a}$")} \pgfmathresult\\ and right of $r$};
    }
    \node at (.5+2.5*2.75,.5) (dots) {$\cdots$};
    \node at (.5+2.5*2.75,-.5) (dots) {$\cdots$};
    \draw [decorate,decoration={brace,amplitude=5pt}]
    (.4+2.5*1,.8) -- (.6+2.5*3.5,.8) node[midway,above=2mm]{$\leq p$ first $a$'s right of $r$};
  \end{tikzpicture}
  \centering
  \caption{Right information: for every $a\in\Sigma$ and then for every
  $b\in\Sigma$.}
  \label{fig:RI}
\end{figure}
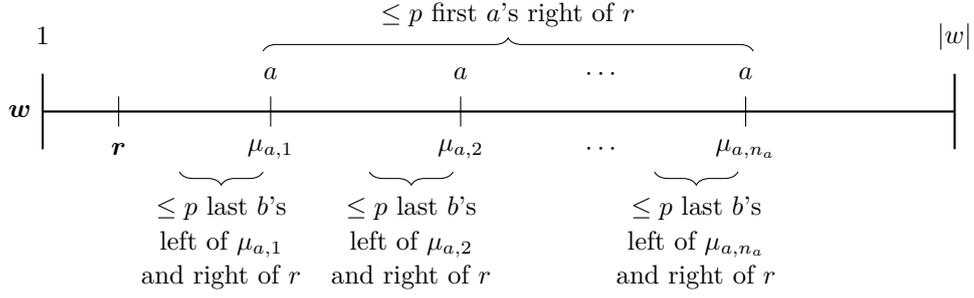

\subparagraph*{RI computation.}
For every iteration of the for-loop, the algorithm stores a constant-sized
data structure $\mathfrak{R}$ storing the RI at the current
right endpoint position~$r$. 
The structure $\mathfrak{R}$ can be trivially initialized for $r = |w|$,
and it can be easily maintained at every step:

\begin{proposition}
  \label{prp:ri}
  We can build in constant time a data structure $\mathfrak{R}$ 
 containing the
  (trivial) right information at $r = |w|$, and keep it up-to-date in constant
  time when decrementing $r$ to contain the right information at
  $r$. Further, $\mathfrak{R}$ is always of constant size.
\end{proposition}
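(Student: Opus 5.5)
We first fix a concrete representation of $\mathfrak{R}$ and observe that its size is bounded independently of $n$. For each letter $a \in \Sigma$, $\mathfrak{R}$ stores the sorted list $\mu_{a,1} < \cdots < \mu_{a,n_a}$ of tracked positions, with $n_a \leq p$. For each tracked position $\mu_{a,i}$ and each letter $b \in \Sigma$, it stores the sorted list of the $\leq p$ last $b$'s left of $\mu_{a,i}$ and right of~$r$. This gives at most $|\Sigma| \cdot p \cdot (1 + |\Sigma| \cdot p)$ integers, each fitting in one memory cell. Since $L$ is fixed, $\Sigma$ and $p$ are fixed, so $\mathfrak{R}$ has constant size. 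For $r = |w|$, the information only concerns the single position $|w|$. Hence \textsc{InitRI} reads $c \coloneq w[|w|]$ from the stored table of $w$. It sets $\mu_{c,1} = |w|$ with $n_c = 1$, and $n_a = 0$ for all $a \neq c$. The $b$-lists attached to $\mu_{c,1}$ are $(|w|)$ for $b = c$ and empty otherwise, using the convention of \cref{def:ri} that ``left of $\mu_{a,i}$ and right of $r$'' includes both endpoints. This takes constant time.

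For the update step, \textsc{UpdateRI} moves from $r$ to $r-1$. Only one new position $r-1$ enters the window $[r-1,|w|]$, and no position leaves it. Let $c \coloneq w[r-1]$, read in constant time. The lists for letters $a \neq c$ are unchanged: the $\leq p$ first $a$'s right of $r-1$ are the same as right of $r$, because $w[r-1] \neq a$. For $a = c$, we prepend $r-1$ to $(\mu_{c,1},\ldots,\mu_{c,n_c})$, and if the list now has $p+1$ elements we discard the last one, $\mu_{c,p}$, together with all its attached $b$-lists. Next we update the $b$-lists of every surviving tracked position $\mu_{a,i}$, for all $a$. For $b \neq c$, the $\leq p$ last $b$'s in $[r-1,\mu_{a,i}]$ coincide with those in $[r,\mu_{a,i}]$. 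For $b = c$, the new position $r-1$ is the leftmost $c$ in the window. It belongs to the $\leq p$ last $c$'s only if fewer than $p$ $c$'s were present before. So we append $r-1$ at the left end of the $c$-list exactly when that list has length $< p$, and otherwise leave the list unchanged. Finally, the new tracked position $\mu_{c,1} = r-1$ needs fresh $b$-lists: these are $(r-1)$ for $b = c$ and empty for $b \neq c$, since the interval $[r-1,r-1]$ contains only one position. Each of these operations touches $O(|\Sigma|^2 p)$ cells, which is constant time.

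Correctness is an induction on decreasing $r$, using the formal terminology of ``$\leq p$ first/last $a$'s'' fixed above. The main point to check carefully is the $b = c$ case of the $b$-lists. We must verify that the truncation condition in the definition, with $i' \coloneq l_q$ when $q = p$, is preserved. If the list was already full, adding a position further left does not change the $p$ last occurrences. If it was not full, it previously contained all $c$'s of the interval, so prepending $r-1$ again gives all $c$'s. A symmetric check handles the truncation of the tracked list at $p$ elements. Once these two invariants are verified, the statement follows.
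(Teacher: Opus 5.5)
Your proposal is correct and follows essentially the same approach as the paper. The initialization tracks only position $|w|$, and each decrement makes the newly entered position the first tracked occurrence of its letter $c$, dropping the oldest tracked $c$ if $p$ were already tracked, and adds it to the $c$-list of every tracked position whose list holds fewer than $p$ entries. You are somewhat more explicit than the paper, notably about the fresh lists attached to the newly tracked position and the inductive check of the truncation convention.
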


In the algorithm, we initialize $\mathfrak{R}$ by a function \textsc{InitRI}()
and update it with a function \textsc{UpdateRI}($\mathfrak{R}$, $r$).
For both these functions, we do not give the pseudocode.

\begin{proof}
  For the initialization, the right information at $r = |w|$ consists simply
  tracking the occurrence at position $|w|$ of the letter at $w[|w|]$ (if it is
  non-neutral), which is easy to compute in constant time.

  For the update, as $r$ gets decremented, the occurrence at $r$ of 
  letter $a \coloneq w[r]$ (if it is non-neutral) gets stored as a tracked position,
  i.e., as the first $a$ to the right of~$r$. (In exchange, if
  we are already tracking $p$ occurrences of $a$ to the right of~$r$, then we
  stop tracking the last such occurrence.)
  Second, for each tracked $a \in \Sigma$ at position $r_a$, if we
  do not already store $p$ occurrences of $b \coloneq w[r]$ to the left of $r_a$,
  then we add $r$ as an occurrence of $b$ to the left of~$r_a$ (and to the right
  of $r$). 
  It is immediate that all of this can be implemented in constant time on a
  data structure that uses constant memory overall.
\end{proof}

\subparagraph*{LI computation.}
We maintain a data structure $\mathfrak{L}$ that stores the LI at $r_l$.
Initially, we compute $\mathfrak{L}$ for $r_1$ at the end of the first iteration of the
for-loop, which is trivial (i.e., we only store the occurrence at $r_1$ of
the letter $w[r_1]$ if it is non-neutral). 
This is done by \textsc{InitLI}($r_1$) whose pseudocode we do not give.
In subsequent iterations, we will compute $\mathfrak{L}$ for the current
value of $r_l$ from the previous structure $\mathfrak{L}_{-1}$
containing the LI at $r_{l-1}$ and the previous structure
$\mathfrak{R}_{-1}$ containing the RI at $r_{l-1}$. Formally:

\begin{claimrep}
  \label{clm:maintainls}
  For each value $l>1$ in the outer for-loop, when the inner while-loop
  concludes (i.e., we exit the loop and we have $r = r_l-1$), then either we
  finish immediately (i.e., $r = |w|$),
  or we can compute the data structure $\mathfrak{L}$
  storing the LI at $r_l$ in constant time 
  at line~\ref{lin:updatels} from the following:
  \begin{itemize}
    \item The data structure $\mathfrak{L}_{-1}$ storing the LI at $r_{l-1}$,
    \item The data structure $\mathfrak{R}_{-1}$ storing the RI at $r_{l-1}$.
  \end{itemize}
\end{claimrep}
The computation above is done by the function \textsc{UpdateLI}($\mathfrak{L}_{-1}$, $\mathfrak{R}_{-1}$) in the algorithm, whose pseudocode we omit.

The above uses the following immediate claim.

\begin{claim}
  \label{clm:leftright}
  For any positions $1 \leq r_1 \leq i < j \leq |w|$ of~$w$, for any letter~$a$,
  given the $\leq p$ last $a$'s left of $i$ and right of~$r_1$, and given the 
  $\leq p$ last $a$'s left of $j$ and right of~$i$,
  we can compute in constant time the $\leq p$ last $a$'s left of~$j$ and right
  of~$r_1$.
\end{claim}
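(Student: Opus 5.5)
The plan is to treat each letter $a$ separately and simply merge two short sorted sequences. Write the given sequences as $A = (l_q, \ldots, l_1)$, the $\leq p$ last $a$'s left of~$i$ and right of~$r_1$, and $B = (l'_{q'}, \ldots, l'_1)$, the $\leq p$ last $a$'s left of~$j$ and right of~$i$. Both are sorted, with $q, q' \leq p$.

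The one subtlety is position $i$ itself, which belongs to both intervals $[r_1,i]$ and $[i,j]$. If $w[i]=a$, then $i$ may appear as the largest element $l_1$ of $A$ and also as the smallest element of $B$. So we first remove $l_1$ from $A$ when $l_1 = i$. Since $i$ also lies in $[i,j]$, it is still represented in $B$, because $B$ lists the last $\leq p$ occurrences and $i$ is the leftmost point of that interval.

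Next we form the output. If $q' = p$, then $B$ already contains $p$ occurrences of $a$ in $[i,j] \subseteq [r_1,j]$. Every other $a$ in $[r_1,j]$ lies strictly left of $l'_{q'}$, so the answer is $B$ itself. Otherwise $q' < p$, and by the formal definition (condition 2 with $i' = i$) $B$ lists all $a$'s of $w[i,j]$. In that case the answer is $B$ preceded by the largest $p - q'$ elements of the corrected $A$.

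To justify correctness, we check the two conditions of the formal definition for the interval $[r_1,j]$. Every listed position carries an $a$. For completeness, note that the $a$'s of $w[r_1,j]$ are the disjoint union of those in $w[r_1,i-1]$ and those in $w[i,j]$. The corrected $A$ gives the last $\min(p, |w[r_1,i-1]|_a)$ of the former. Whenever we draw from $A$, the list $B$ gives all of the latter. So no $a$ is skipped in the window $[i', j]$, where $i'$ is either $r_1$ or the smallest output position.

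The main (minor) obstacle is this boundary double-count at $i$, together with checking that dropping $l_1$ never loses needed information. Dropping it costs at most one element of $A$, which could leave the corrected $A$ with only $p-1$ entries. That is still enough, because $i \in B$ gives $q' \geq 1$, so we need at most $p-1$ elements from $A$.

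Finally, for complexity, all sequences have length at most $p$ and $p$ is constant, so the merge takes constant time and constant memory. Doing this for each of the constantly many letters in $\Sigma$ is still constant.
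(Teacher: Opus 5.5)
Your proof is correct and takes the same approach as the paper, which calls this claim immediate and later describes it as concatenating the two lists and truncating to the last $p$ occurrences; your explicit removal of the duplicate position $i$ fills in the one detail the paper leaves unstated. One justification is backwards, though. The position $i$ need not lie in $B$: as the leftmost point of $[i,j]$, it is the first occurrence dropped when $|w[i,j]|_a > p$. You only use $i \in B$ in the case $q' < p$, where $B$ lists every $a$ of $w[i,j]$, so the argument is unaffected.
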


We can now prove \cref{clm:maintainls}. Note that this is the only place where
we use the second point in the definition of the RI, namely, that the RI stores
letter occurrences to the left of the tracked positions.

\begin{proof}[Proof of \cref{clm:maintainls}]
  By \cref{lem:boundedjump}, we can distinguish three cases for the limit $r_l$.

  The first case is where $r_l = r_{l-1}$. In which case, we trivially conclude
  by setting $\mathfrak{L} \coloneq \mathfrak{L}_{-1}$.

  The second case is where $r_l$ is one of the positions tracked by the RI at
  $r_{l-1}$. Thus, by definition of the RI (second point),
  for each letter $b \in \Sigma$, the structure
  $\mathfrak{R}_{-1}$ storing the RI at $r_{l-1}$ gives us
  the following (*): the $\leq p$ last $b$'s left of the tracked position~$r_l$
  and right of~$r_{l-1}$. Further, for each letter $b \in \Sigma$ the structure
  $\mathfrak{L}_{-1}$ stores the following ($\dagger$): the $\leq p$ last $b$'s left
  of~$r_{l-1}$ and right of~$r_1$. Putting (*) and ($\dagger$) 
  together using \cref{clm:leftright} for each letter $b \in
  \Sigma$, we obtain for each $b \in \Sigma$ the $\leq p$
  last $b$'s left of~$r_l$ and right of~$r_1$. This gives us the LI at $r_l$
  which is what we needed.

  The third case is where $r_l = |w|+1$, i.e., we exit the while-loop
  immediately. But this was excluded from the outset: the algorithm
  finishes immediately before reaching line~\ref{lin:updatels}.
\end{proof}

\subsubsection{Finding Rare Letter Occurrences: Concluding the Proof of \cref{clm:algofinish}}

\begin{algorithm}[t]
\caption{GetRareOccurrences}
  \label{alg:getRareOccurrences}
\begin{algorithmic}[1]
\Procedure{GetRareOccurrencesSingle}{$a, l, r, \mathfrak{L}_{-1}, \mathfrak{R}_{-1}$}
  \Statex ~~~~\,\textbf{Input:} Letter $a$, positions $l$ and $r$ with $|w[l,r]|_{a} < p$,
  structure $\mathfrak{L}_{-1}$ storing the LI at
  \Statex \hspace{\algorithmicindent} $r_{l-1}$ (or unset if
  $l=1$), structure $\mathfrak{R}_{-1}$ storing the RI at $r_{l-1}$ (or
  unset if $l=1$)
  \Statex ~~~~\,\textbf{Output:}
  $\Delta[a]$ occurrence lists with all occurrences of $a$ in
  $w[l,r]$
  \If{$l == 1$}
    \Comment{$\mathfrak{L}$ and $\mathfrak{R}$ are unset}
    \State $\Delta[a] \gets$ \Call{FinishMinBackgroundTraversal}{$a$}
    \label{lin:getmin}
    \Comment{see \cref{clm:finishmin}}
  \Else
    \State $\Delta' \gets$ \Call{FinishMaxLeftBackgroundTraversal}{$a$}
    \label{lin:maxleftfinishes}
    \Comment{see \cref{clm:rareincomplete}}
    \State $\Delta[a] \gets$ \Call{GetRare}{$\mathfrak{L}_{-1}$,
    $\mathfrak{R}_{-1}$, $l$, $r$, $\Delta'$}
    \label{lin:getrare}
    \Comment{see \cref{clm:rareincomplete}}
  \EndIf
  \State \Return $\Delta[a]$
\EndProcedure
\Statex
\Procedure{GetRareOccurrences}{$l, r, \mathfrak{L}_{-1}, \mathfrak{R}_{-1}, \text{nocc}$}
  \State $\Delta \gets$ empty occurrence list
  \ForAll{$a \in \Sigma$}
    \If{$\text{nocc}[a] < p$} \Comment{letter $a$ is rare}
      \State $\Delta[a] \gets$ \Call{GetRareOccurrencesSingle}{$a$, $l$, $r$, $\mathfrak{L}_{-1}$, $\mathfrak{R}_{-1}$}
    \EndIf
  \EndFor
  \State \Return $\Delta$
\EndProcedure
\end{algorithmic}
\end{algorithm}

We conclude by finally explaining how the LI and RI allow us to find rare letter
occurrences and test the condition of the while-loop for $l>1$, together with
the results of the max-left background traversal, to finish the proof of 
\cref{clm:algofinish}. This is presented in
\cref{alg:getRareOccurrences}, whose functions are called at lines
\ref{lin:getrareocc} and \ref{lin:getrareoccs} in \cref{alg:antoineenum}.

The code of \cref{alg:getRareOccurrences} also handles the case $l=1$ in the way
that was already explained in the main text. For the case $l>1$,
the only interesting lines of \cref{alg:getRareOccurrences} are
line~\ref{lin:maxleftfinishes} and line~\ref{lin:getrare}. Recall that, as
explained in the main text,
the max-left background traversal gives us for all
$a \in \Sigma$ the $\leq
p$ $a$'s left of~$r_1$, for $r_1$ the limit for~$l=1$. We
use a function \textsc{FinishMaxLeftBackgroundTraversal}($a$) (whose pseudocode
we omit) to retrieve this at line~\ref{lin:maxleftfinishes}. (Note that this can
be called for several values of~$l$, so the result of the max-left background
traversal, once finished, is stored and can be used in the rest of the execution
of the algorithm; but for each $a \in \Sigma$ there is only one single max-left
background traversal that will ever be started and completed.)
We then combine this result with data structures storing the LI and RI, at
line~\ref{lin:getrare}, using a function \textsc{GetRare} whose pseudocode we
omit.

Only two points remain to be shown: that the max-left background traversal
finishes in time for when we want to use its results, and that the occurrences
of a letter can indeed be computed from the LI and RI and the results of the
max-left background traversal. We show these in a combined lemma:

\begin{claimrep}
  \label{clm:rareincomplete}
  For each $l > 1$, let $r$ be the minimal
  position where a non-neutral letter $a \in
  \Sigma$ is frequent in $w[l,r]$ (and set $r = |w|+1$ if $a$ is initially
  rare in $w[l,|w|]$). Then:

  \begin{itemize}
    \item We can finish the max-left background traversal for~$a$ in
      constant time (if not already finished) and obtain from its results
      the $\leq p$ last $a$'s left of~$r_1$.
    \item From the $\leq p$ last $a$'s left of~$r_1$, together with the LI at
      $r_{l-1}$ and RI at $r_{l-1}$, we can obtain the $<p$ occurrences of~$a$
      in $w[l,r-1]$.
  \end{itemize}
\end{claimrep}

\begin{proof}
  We first show the first point.
  As $a$ is rare in $w[l,r-1]$, we have $|w[l,r-1]|_a< p$. We know that all
  occurrences of~$a$ in~$w$ to the right of~$r$ have
  been swept over by the while-loop (at least for the execution for the current
  value of~$l$); and we know that all occurrences of~$a$ to the left
  of~$l$ have been swept over by the left endpoint with $l>1$ in the executions
  of the for-loop.
  Hence, since the max-left background traversal started, we have swept over all
  occurrences of~$a$ except at most~$p$. So we can terminate that process in
  constant time (if not yet done). The result of the max-left background
  traversal gives us the $\leq p$ last $a$'s left of~$r_1$.

  We now show the second point. The key point is that it suffices to explain how we can
  obtain the $\leq p$ last $a$'s left of~$r$. Indeed, they necessarily cover all
  $a$'s in $w[l,r-1]$, as there are at most $p-1$~-- so this is what we explain in
  the sequel.

  We have obtained from the max-left background
  traversal the $\leq p$ last $a$'s left of~$r_1$. The LI at~$r_{l-1}$ gives us
  the $\leq p$ last $a$'s left of~$r_{l-1}$ and right of~$r_1$.
  Concatenating and truncating these lists, similarly to \cref{clm:leftright},
  gives us the following (*): the $\leq p$ last $a$'s left of~$r_{l-1}$. Now, by
  the first point of \cref{def:ri},
  the RI at~$r_{l-1}$ gives us the following ($\dagger$):
  the $\leq p$ first $a$'s right of~$r_{l-1}$ for each
  $a \in \Sigma$ (the
  tracked positions).

  Now we must reason about what the value of~$r$ can be, relative to the
  previous limit $r_{l-1}$. (The reasoning that follows resembles
  \cref{lem:boundedjump}, but note that it applies to a position $r$ which is
  the minimal position where $a$ becomes rare but which is not necessarily the
  limit $r_l$.)

  We first assume $r < |w|+1$, and deal with the case
  $r = |w|+1$ (i.e., $a$ initially rare) at the end. As $a$ is rare in
  $w[l,r-1]$ but not in $w[l,r]$, we know that $w[r] = a$, and
  $|w[l,r-1]|_a<p$, so $r$ is one of the $\leq p$ first $a$'s right of~$l$.
  By our initial choice of $p \geq 2$, we know that $l-1 < r_{l-1}$ so that $l
  \leq r_{l-1}$, so we also have $|w[r_{l-1},r-1]|_a<p$, so that $r$ is one of
  the $\leq p$ first $a$'s right of~$r_{l-1}$.
  Thus, by ($\dagger$), $r$ is a position tracked by the RI at~$r_{l-1}$,
  and in fact
  the RI contains 
  the $\leq p$ last $a$'s left of~$r$
  and right of $r_{l-1}$, because all these positions are tracked. (Note that,
  here, we only use the first point
  of the definition of the RI in \cref{def:ri} -- we don't use the second point.)
  Concatenating this with (*), we obtain the $\leq p$ last $a$'s
  left of $r$, which is what we needed.

  We now discuss the case where $r=|w|+1$, i.e., $a$ is initially rare. We make
  the claim (\#): 
  there are $\leq p$ occurrences of~$a$ right of~$r_{l-1}$ in~$w$. Indeed, assume
  by way of contradiction that there are at least $p+1$. Then, there are at
  least $p$ occurrences of~$a$ right of~$r_{l-1}+1$. Like in the previous paragraph
  we have $l \leq r_{l-1}$, so $l \leq r_{l-1}+1$.
  But then $w[l,|w|]$ would contain at least these $p$
  occurrences of~$a$,
  which contradicts the fact that $a$ is rare in $w[l,|w|]$. So
  indeed we have established (\#). From (\#), we know that all occurrences of~$a$
  right of~$r_{l-1}$ are tracked by the RI at $r_{l-1}$ (again by the first
  point.). Combining these with
  (*), we obtain the $\leq p$ last $a$'s left of~$|w| = r-1$, which is again
  what we needed and which concludes.
\end{proof}

Thus, we have explained how the algorithm maintains the RI and LI, and how
they can be used to test the condition of the inner while-loop. As we explained
in our overall presentation of the algorithm,
this ensures that the algorithm correctly enumerates all $L$-infixes (using
\cref{clm:nostop} to argue that the infixes in the main algorithm are actually
$L$-infixes, together with \cref{lem:easycase}), and
concludes the proof of 
\cref{clm:algofinish} 
and hence of \cref{thm:main}.

\subsection{Examples for Background Traversals}
\label{apx:exabackground}

We last give some examples illustrating why the notion
of \emph{background traversal} used in the proofs is needed, by illustrating 
each kind of background traversal in turn. 
This material is only meant to convey intuition and is not part of the formal
proof of our results.

\subsubsection{Example Min Background Traversal}
\label{apx:min}

The following example illustrates why the min background traversal is
needed:

\begin{example}
  Let $L_{ab}$ be the language from \cref{exa:threshext}.
  Now consider the following input word $w$
  where we do not specify the amount of neutral letters $e$.
  \begin{alignat*}{20}
    index:  &  & && i_1&& && i_2&& && i_3&& && i_4&& && i_5&& && i_6&& && i_7 && && i_8 &&\\
    \boldsymbol{w} & \boldsymbol{=}&\quad e^* &&\quad \boldsymbol{a}&&\quad e^*&&\quad \boldsymbol{b}&&\quad e^*&&\quad \boldsymbol{a}&&\quad
    e^*&&\quad \boldsymbol{b}&&\quad e^*&&\quad \boldsymbol{a}&&\quad e^*&&\quad \boldsymbol{b}&&\quad e^*&&\quad \boldsymbol{a}&&\quad e^*&&\quad \boldsymbol{b}&&\quad e^*&&
  \end{alignat*}

  The algorithm starts with occurrence lists $\Lambda_a$ and $\Lambda_b$
  where the indices $i_j$ are classified by letter but are in random order.
  At the start, the min background traversal for every letter is started.

  In this example, we only look at the first iteration of the for-loop ($l=1$).
  At the beginning, the occurrence lists of rare letters $\Delta$ is empty
  and the set of frequent letters $T$ contains both letters $a$ and $b$.
  The condition in line~\ref{lin:whilecond} of the algorithm is easily tested,
  because the word $w[1,r]_{-T}=e^*$ is contained in $\cond(\{a,b\})=\Sigma^*$.
  Whenever $r$ is decremented,
  the min background traversal of every letter moves forward by one step.

  In the body of the inner while-loop, when $r=i_6$, in line~\ref{lin:noccrdec},
  the counter gets decremented to $\text{nocc}'[b]=2$ and
  is thus smaller than $p=3$. We set $T=\{a\}$
  and then, the algorithm calls \textsc{GetRareOccurrencesSingle}($b$, 1, $i_6-1$)
  (remember that $\mathfrak{L}$ and $\mathfrak{R}$ are not used in this case $l=1$).
  At this point, the min background traversal can be finished in constant time,
  by \cref{clm:finishmin}.
  This is possible because the traversal already read at least one letter
  at every position $r\geq i_6$ and there are only $p-1$ many $b$ left,
  thus, the traversal can be finished in constant time.
  It stores $\Delta[b]=\{i_2, i_4\}$.

  At position $r=i_5$, the same happens to the letter $a$
  and we store $\Delta[a]=\{i_1, i_3\}$.

  Directly after decrementing $r$ to $i_5-1$, the while-loop exits
  because $T=\emptyset$ (\allrare).

  The conditional expressions after the while-loop
  call \textsc{ProduceRemainingL}(1, $i_5-1$, $\Delta$).
  With the constant amount of data stored in $\Delta$ (only two indices per letter),
  it is now easy to decide if the remaining word contains $a b$
  and to enumerate the remaining infixes for the fixed left border $l=1$:
  here we will output infixes $[1,j]$ for $i_2\leq j<i_3$.

  Note that the algorithm never touches the indices $i_1$ nor $i_2$
  with the pointers $l$ or $r$ at that point. So the min background traversal is
  important to identify the positions of $a$ and $b$ in $w[1,i_5-1]$. Indeed, if
  the initial word had been different and had a letter $b$ at position $i_1$ and an $a$ at
  position $i_2$, then there would be no infixes of the form $[1,j]$ to produce
  with $1 \leq j < i_5$; and this word would be indistinguishable from the word on
  which we ran the example, were it not for the results of the min background
  traversals.
\end{example}

\subsubsection{Example Max-Left Background Traversal}
\label{apx:maxleft}

The following example illustrates why the max-left background traversal is
needed:
\begin{example}
  Consider the language $L_{aabb}$ on the alphabet $\Sigma=\{a,b,e\}$
  that has $e$ as a neutral letter
  and contains the words satisfying one of the following:
  \begin{enumerate}
  \item the word is exactly $aabb$ (up to the neutral letters),
  \item the word contains at least 2 $a$ and at least 3 $b$,
  \item the word contains at least 3 $a$ and at least 2 $b$.
  \end{enumerate}
  The language $L_{aabb}$ is in ZG (intuitively because the order between
  letters only matters when there is a small number of occurrences of each
  letter), and it is clearly extensible because every condition in the list
  above is extensible except the first one, however every extension of the words
  of the first condition is covered by one of the two other conditions.

  Hence, $L_{aabb}$ is in particular \threshext,
  and we can take threshold $p=3$.
  For \cref{threshdef-condp}, we have
  $\cond(\{a,b\})=\Sigma^*$,
  $\cond(\{a\})$ contains all words with at least two~$b$, and symmetrically
  $\cond(\{b\})$ contains all words with at least two~$a$.

  Consider now the following input word $w$
  without specification of the exact amount of neutral letters $e$.
  \begin{alignat*}{19}
    index:  &  & && i_1&& && i_2&& &&i_3&& &&i_4&&&& i_5&&&& i_6&&&& i_7&&&& i_8&&\\
    \boldsymbol{w} & \boldsymbol{=}&\quad e^*&&\quad \boldsymbol{a}&&\quad
    e^*&&\quad \boldsymbol{a}&&\quad e^*&&\quad \boldsymbol{a}&&\quad e^*&&\quad
    \boldsymbol{a}&&\quad e^*&&\quad \boldsymbol{b}&&\quad e^*&&\quad
    \boldsymbol{b}&&\quad e^*&&\quad \boldsymbol{b}&&\quad e^*&&\quad
    \boldsymbol{b}&&\quad e^*&&
  \end{alignat*}

  The algorithm will start the outer for-loop with $l=1$
  and then the inner while-loop with $r=|w|$.
  The right border can by decremented by only counting non-neutral letters.
  After position $r=i_7$, the letter $b$ becomes rare.
  The call to \textsc{GetRareOccurrencesSingle}($b$, 1, $i_7-1$) uses the min
  background traversal for~$b$, i.e., it calls
  \textsc{FinishMinBackgroundTraversal}($b$)
  to find the remaining positions $\Delta_b=\{i_5,i_6\}$.
  After position $r=i_6$, the amount of letters $b$ will be one
  and therefore for $r=i_6-1$, the infix $w[1,i_6-1]\not\in L_{aabb}$
  because $b\not\in \cond(\{a\})$.
  Thus, the first limit is $r_1=i_6$.
  At the end of the first iteration, we invoke
  \textsc{StartMaxLeftBackgroundTraversal}($a$, $\Delta_a$, $r_1$) for $a\in\{a,b\}$
  to start the max-left background traversal of every non-neutral letter.

  For $i_1+1\leq l\leq i_2$, the limits will be the same as $r_1$.
  These iterations are similar to the above
  except that \textsc{FinishMaxLeftBackgroundTraversal}($b$)
  will be called to compute $\Delta_b$.
  The max-left background traversal of $b$ can be finished in constant time at
  that point, 
  because it passed long enough over $\Lambda_b$ to find
  the $\leq p$ last $b$ left of $r_1$ ($\Delta_b=\{i_5, i_6\}$).
  The first time, these positions are saved, and then they are reused for further queries.

  At position $l=i_2+1$, directly at the beginning ($r=|w|$),
  letter $a$ is already rare.
  Up to this position, the max-left background traversal of $a$ could
  pass over $\Lambda_a$ and at most constantly many $a$ are left.
  So it can be finished
  and the result $\Delta_a=\{i_2,i_3,i_4\}$ can be stored for later use.
  After $r=i_7$, the letter $b$ becomes rare and $\Delta_b$ is filled.
  The inner while-loop is exited at $r=i_7-1$ because $T=\emptyset$.
  Now, to decide in \textsc{ProduceRemainingL}($i_2+1$, $i_7-1$, $\Delta$)
  if the infix $[i_2+1,i_7-1]$ (among others) is exactly $aabb$,
  we need the exact positions of the $\leq 2$ last $a$'s and $b$'s left of $r_1$
  that we find in $\Delta$ thanks to the max-left background traversals. In the
  word $w$ above, we must indeed produce the infixes $[i_2+1,i_7-1]$ (among
  others).

  Note that we never passed over the positions $i_3$ and $i_4$ of $a$ with our
  indices $l$ and $r$. So the input word would be indistinguishable from a word
  where $i_4$ contains $b$ and $i_5$ contains $a$ (in which case we should not
  produce the infix $[i_2+1,i_7-1]$), were it not for the max-left background
  traversals.
\end{example}

\end{toappendix}

\section{Discussion and Lower Bounds}
\label{sec:lower}

We have shown that \threshext languages $L$ enjoy dynamic
enumeration of $L$-infixes with linear preprocessing and memory, constant-time
updates,
constant delay, and constant enumeration memory. Languages with
such an algorithm are called \emph{tractable}.
In this section, we study which other languages are tractable and which are
(conditionally) not tractable.

For simplicity, in the rest of this section, we focus on languages featuring a
neutral letter, which we write $e$ and
omit from regular
expressions (e.g., $b^* a$ should be understood
as $(e+b)^*ae^*$). %
We first give prerequisites about the \emph{prefix-$U_1$ problem} used as a
hypothesis in almost all of our lower bounds. We then discuss the case of languages
outside of \ZG, and then discuss the case of
languages which are in \ZG but not \threshext.

\subparagraph*{Prefix-$\bm{U_1}$.}
Most lower bounds presented here are conditional to 
the \emph{prefix-$U_1$ hypothesis} from~\cite{amarilliICALP21}, that we now recall.
Let $n \in \NN$.
The \emph{prefix-$U_1$ problem} asks us 
for a data structure such that,
given a set $S \subseteq \{1, \ldots, n\}$,
we can perform the following tasks on input
$i\leq n$:
(1.)~test whether $i\in S$;
(2.) insert $i$ to $S$;
(3.) remove $i$ from $S$;
(4.) return \textsc{yes} if $i \leq \min S$.

The \emph{prefix-$U_1$ hypothesis} states that there is no data structure that can
achieve all these tasks in constant time (independent from $n$) in the unit cost
RAM model with logarithmic cell size. 
Note that the first three operations are easy to support in constant time (e.g.,
with a table of Booleans), so the challenge is to also support (4.).
The prefix-$U_1$ problem
is a weaker version of the well-known \emph{priority queue} problem (where 
we would replace 
(4.) by ``return $\min S$''). The priority queue problem
is itself a special case of the \emph{successor search} problem where 
(4.) would be: ``return $\min\{j \in S, i\leq j\}$''. There are data
structures solving successor search in time $O(\log\log n)$ and this
has been shown to be optimal (cf.~\cite{patrascu07} and
\cite[Appendix~A.2]{amarilliICALP21}).
The upper bound thus also covers the 
priority queue problem and prefix-$U_1$ problem as special cases; however, no lower bounds are known for these
problems specifically.

\begin{toappendix}
  \subsection{Languages Outside of ZG}
  \label{apx:nonzg}
\end{toappendix}

\subparagraph*{Languages outside of \ZG.}
It turns out that there are non-\ZG languages $L$ for which 
dynamic enumeration of $L$-infixes is tractable (which is not covered by our %
\cref{thm:main}), even though dynamic membership to~$L$ is not tractable by
\cref{prp:zgupdates}. For instance:

\begin{claimrep}
  \label{clm:l0}
  The language $L_0 \coloneq b^* a$ is not in \ZG but is tractable.
\end{claimrep}

\begin{proofsketch}
  We maintain an occurrence list $\Lambda_a$ of the $a$'s in the word under
  updates, and enumerate $L_0$-infixes by considering every $a$ and considering
  all ways to move the endpoints (over neutral letters, or over $b$'s for the
  left endpoint).
\end{proofsketch}

\begin{proof}
We maintain an occurrence list $\Lambda_a$ of the $a$'s in the word under
  updates. Then we can enumerate $L_0$-infixes 
by simply scanning $\Lambda_a$ to iterate over all positions $p$ of $w$
containing an $a$, and then expanding and
outputting the infixes around position $p$ until we reach an $a$ (or the
beginning of $w$) on the left endpoint or reach a non-neutral letter (or
the end of $w$) on the right endpoint.
\end{proof}

By contrast, the following similar language $L_1 \coloneq b^+ a$ is (conditionally)
intractable:

\begin{claimrep}
  \label{clm:l1}
  Under the prefix-$U_1$ hypothesis, the language $L_1 \coloneq b^+ a$ is not tractable
\end{claimrep}

\begin{proofsketch}
  We reduce from prefix-$U_1$: we store positions of a set $S$ as letter $b$'s,
  and insert a letter $a$ at a position~$i$ to test if the leftmost $b$ is $>i$
  or not.
\end{proofsketch}

\begin{proof}
We explain how to implement a prefix-$U_1$ data structure for a set $S$. Assume
  that all positions in $S$ carry an occurrence of $b$ and all positions not in~$S$ carry
an occurrence of the neutral letter $e$. For the last operation, on input $i$ we
  put an occurrence of~$a$
at position $i$: then there is an $L_1$ infix iff $i > \min S$, so enumerating
  $L_1$-infixes with constant delay gives an $O(1)$ algorithm for
prefix-$U_1$.
\end{proof}

To illustrate further how sensitive the problem is, consider the
following:

\begin{claimrep}
  \label{clm:l2}
  $L_2 \coloneq a\Sigma^*a$ is tractable, but $L_2'
  \coloneq b
  \Sigma^* a$ is not (under the prefix-$U_1$ hypothesis).
\end{claimrep}

\begin{proofsketch}
  For $L_2$ we consider all pairs of~$a$'s from the occurrence lists and
  consider all ways to move the endpoints left and right of the chosen $a$'s.
  For $L_2'$ we do like in \cref{clm:l1}.
\end{proofsketch}

\begin{proof}
  Enumerating $L_2$-infixes can be done tractably by
considering all pairs of positions of~$a$
using the occurrence lists and outputting the corresponding infixes (moving the
left and right endpoint respectively left and right of the chosen $a$'s, until
we reach another $a$ or the word boundaries).  By contrast, for $L_2'$ we can do
  the same reduction as for \cref{clm:l1}.
\end{proof}

We close the paragraph with two 
remarks. First, any \emph{extensible}
regular language outside of \ZG is not tractable under the
prefix-$U_1$ hypothesis. Indeed, we explained in \cref{sec:extensible} that, for extensible languages, 
dynamic membership for $L$ reduces to dynamic enumeration of $L$-infixes: thus
\cref{prp:zgupdates} concludes.
Second, some
regular languages outside of \ZG are \emph{unconditionally} not tractable, by
reducing from prefix-parity (see \cite{fredman1989cell}): see
Appendix~\ref{apx:nonzg}.

\begin{toappendix}
  We finish this section of the appendix by substantiating the claim made in the
  main text that one can show \emph{unconditional} lower bounds for some non-\ZG
  regular languages. It suffices for this to take any regular language $L$ which is extensible (so
  dynamic membership to $L$ reduces to $L$-infix enumeration, as discussed in
  \cref{sec:extensible}) and for which $L$ admits an unconditional lower bound
  on dynamic membership. This is the case, for instance, for languages outside
  of the class QSG studied in~\cite{amarilliICALP21}, which admit an
  unconditional lower bound of $\Omega(\log n / \log \log n)$ in the cell
  probe model by reduction from
  the so-called \emph{prefix-$\mathbb{Z}_d$} problem (using the result of
  \cite{fredman1989cell})). One concrete example of such a language, for
  instance, is the following regular language $L_\#$, on alphabet $\Sigma =
  \{\#, \$, a, e\}$ and with neutral letter~$e$:
  \[L_\# \coloneq \Sigma^*\# (aa)^* \$ a^* \#\Sigma^*\]
  Indeed it is clear how dynamic membership for this language, even on words
  that start and end with $\#$ and contain no other $\#$, admits a reduction
  from the prefix-parity problem. The same reduction shows that the language is
  not tractable, unconditionally, for the dynamic enumeration problem for
  $L_\#$-infixes.
\end{toappendix}

\begin{toappendix}
  \subsection{Languages in ZG}
\end{toappendix}

\subparagraph{Languages in \ZG.}
We now study the case of languages $L$ in \ZG and discuss whether
semi-extensibility is necessary for the language to be tractable. We first claim
that some non-semi-extensible \ZG languages are indeed (conditionally) not
tractable:

\begin{claimrep}
  \label{clm:l3}
  Consider the language $L_3$ on alphabet $\Sigma = \{a, b, c, e\}$
  (with $e$ neutral) that contains the words $w$ such that $|w|_c
  \geq 1$ and $w_{-\{c\}} \in L_{ab}$, for $L_{ab}$ the language of
  \cref{exa:threshext}. Then $L_3$ is non-semi-extensible ZG and is 
  not tractable under the prefix-$U_1$ hypothesis.
\end{claimrep}

\begin{proof}
  The language $L_3$ is in ZG (as can be shown, e.g., with the characterization
  of ZG languages given as~\cite[Corollary~3.5]{amarilli_paperman_ZG}). However
  it is not semi-extensible:
  in particular the words in $babc^+$ are not in~$L_3$ while those in
$abc^+$ are.

  If $L_3$ is tractable, then we can have an $O(1)$ data structure for the
  prefix-$U_1$ problem on a set~$S$, via the following reduction.
  For any position in $S$ we put 
a $c$ in our word and for any position outside of $S$ we have the neutral
  letter $e$. To perform the operation of point~(4.) on input $i$, we substitute the
positions $i$, $i+1$, and $i+2$ to write $bab$. (Note that up to padding the
word with neutral letters we can ensure that $i+2 \leq |w|$.)
Now, $w$ contains an $L_3$-infix iff $i < \max S$.
\end{proof}

However, some non-semi-extensible \ZG languages are tractable, for instance:

\begin{claimrep}
  \label{clm:l4}
  The language $L_4 \coloneq (aa)^*a$ of words with an odd number of $a$'s is tractable.
\end{claimrep}

\begin{proofsketch}
  We go over the occurrence list for~$a$ and enlarge $L_4$-infixes around
  each~$a$.
\end{proofsketch}

\begin{proof}
We go over
the occurrence list for~$a$ and enlarge infixes around every $a$, making sure
that we extend infixes either with neutral letters only, or with one $a$ to the
left and to the right simultaneously. Note that each infix is only produced
once, namely, when enlarging infixes around the $k$-th occurrence of $a$ that
  it contains, where
  $2k+1$ denotes the number of $a$'s in the factor realized by the infix.
\end{proof}

Thus, a characterization of the languages with tractable infix enumeration under
updates is still out of reach, even for ZG languages, indeed even for finite
languages with a neutral letter. For instance we do not know the status of $(aa)^+$, or even of $aa$.

\section{Conclusion and Future Work}
\label{sec:conc}
  \nosectionappendix

We have studied the dynamic enumeration of $L$-infixes for fixed 
languages
$L$, aiming at linear preprocessing, linear memory, constant-time updates,
constant delay, and constant-memory enumeration. Our main 
achievement is an algorithm that applies to any fixed \emph{semi-extensible \ZG
language}.
Further, we have %
discussed other tractable and intractable cases.

One obvious direction for further research is to achieve a complete
characterization of the tractable regular languages. %
However, the algorithm 
of \cref{thm:main}
shows that some tractable cases are technical, while other simple cases
(e.g., $L = aa$
with neutral letter) remain open. Some other tractable 
cases would deserve further study, e.g., languages like $a \Sigma^* b + b
\Sigma^* a$
(like \cref{clm:l2}), or languages generalizing the algorithm for
$(aa)^* a$ (cf \cref{clm:l4}).

Another research direction would be to characterize languages with different
complexity regimes below $O(\log n)$. In particular, one could aim at
generalizing the $O(\log \log n)$ class for dynamic membership from~\cite{amarilliICALP21}. The question can be generalized to more
general queries than $L$-infix queries, on words or even on
trees~\cite{amarilli2025dynamic}; or to more general update operations like
letter insertions or deletions. Another relevant type of queries, whose
complexity might differ, is \emph{direct access}
queries~\cite{bourhis2025dynamic}, where we
maintain under updates a data structure that allows us to directly access the
$i$-th $L$-infix of the word given the index~$i$; or \emph{tuple testing}
queries, where we can determine efficiently whether a given infix $[i,j]$
realizes a word of~$L$. More generally, a natural question for future work is
whether our results can generalize to more general queries in monadic
second-order logic (beyond $L$-infix queries).
It would also be interesting to study the alternative semantics where after each
update we must enumerate only the new results, though we expect that this would
require different techniques.

\begin{toappendix}
\vfill
\doclicenseThis
\end{toappendix}

\bibliography{bib}

\end{document}